\begin{document}
\newtheorem{definition}{Definition}
\newtheorem{example}{Example}
\newtheorem{theorem}{Theorem}
\newtheorem{lemma}{Lemma}

%
%
%
%

\title{ Subjective Knowledge Acquisition and Enrichment\\ Powered By Crowdsourcing }

\author{Rui Meng~~~~Hao Xin~~~~Lei Chen~~~~Yangqiu Song\vspace{3pt} \\
	\affaddr{Department of Computer Science and Engineering, HKUST, Hong Kong SAR, China} \\
	\affaddr{\{rmeng,hxinaa,leichen,yqsong\}@cse.ust.hk}
}
%

\maketitle
\begin{abstract}
	
	Knowledge bases (KBs) have attracted increasing attention
	due to its great success in various areas, such as Web and mobile search.
	Existing KBs are restricted to objective factual 
	knowledge, such as \textsc{city population} or \textsc{fruit shape}, whereas,
	subjective knowledge, such as \textsc{big city}, which is commonly mentioned
	in Web and mobile queries, has been neglected. Subjective knowledge
	differs from objective knowledge in that it has no documented or
	observed ground truth. Instead, the truth relies on people's dominant
	opinion. Thus, we can use the crowdsourcing technique to
	get opinion from the crowd. In our work, we propose a system,
	called \textit{\underline{c}r\underline{o}wdsourced \underline{s}ubjective \underline{k}nowledge \underline{a}cquisition (CoSKA)},  
	for subjective knowledge acquisition powered by crowdsourcing
	and existing KBs. The acquired knowledge can be used to enrich
	 existing KBs in the subjective dimension which bridges the
	gap between existing objective knowledge and subjective queries.
	The main challenge of \textit{CoSKA} is the conflict between large scale
	knowledge facts and limited crowdsourcing resource. To address this
	challenge, in this work, we define knowledge inference rules and
	then select the seed knowledge judiciously for crowdsourcing to
	maximize the inference power under the resource constraint. Our
	experimental results on real knowledge base and crowdsourcing
	platform verify the effectiveness of \textit{CoSKA} system.

\end{abstract}

%
%
%
%
%
%



\section{Introduction}
\label{sec:introduction}
\textbf{Motivation}. In recent years, knowledge bases (KBs) have become increasingly popular and large-scale KBs have been constructed, such as Freebase~\cite{freebase}, DBpedia~\cite{dbpedia}, YAGO~\cite{YAGO}, KnowItAll~\cite{knowitall}, etc. The KBs encode information and knowledge of the real world in a structured, machine-understandable way which can empower various kinds of applications, especially Web and mobile search. Despite of containing millions of knowledge facts on large amount of entities and relations, the knowledge encoded by these KBs is limited in objective dimension. In other words, existing KBs have so far focused on encoding objective knowledge facts, which are factual and
observable, such as \textsc{fruit shape}, \textsc{movie director} and so forth. In contrast, many real world queries are subjective, e.g., around 20\% of product-related queries are labeled as being ``subjective'' by workers~\cite{subjective_statistics}, 63\% of location-based queries in mobile search are asking for subjective opinions~\cite{location_based_opinion}, and need the corresponding subjective knowledge as the query answers.  For example, there might exist such queries, ``\textit{popular American singers}'' or ``\textit{beautiful cities in Europe}'', we refer the knowledge concerning \textit{popular singers} and \textit{beautiful cities} as the \textit{subjective knowledge}. More specifically, subjective knowledge refers to the \textit{dominant opinion} about whether a particular subjective property applies to entities of a particular type~\cite{miningsubjective}. For instance, given a pair consists of a subjective property\footnote{Typically expressed as an adjective} and a type from a KB (subjective property-type pair, ST pair), e.g., \textsc{popular} and \textsc{singer}, we can find a list of instances of the type \textsc{singer} from the KB, e.g., \textsc{Elvis Presley}, where the dominant opinion of \textit{``whether Elvis Presley is a popular singer''} is a piece of subjective knowledge. 
As this kind of information is missing in existing KBs, queries concerning such information cannot be satisfied. Fortunately, crowdsourcing, which has been recently proved to be successful for various human intrinsic tasks such as entity resolution~\cite{CrowdER}, knowledge extraction~\cite{CrowdsourcedKnowledge}, translation~\cite{Crowdsourcingtranslation}, etc., provides a natural and reliable way of obtaining the subjective knowledge by collecting opinions from  workers.

  
Many works have been done to perform KB enrichment, completion and population~\cite{AMIE}~\cite{kbcompletion}~\cite{patternknowledgeenrich}~\cite{KBP_Successful}~\cite{KBP_TacklingChallenge}, 
but none of these works focus on the subjective dimension. For subjective knowledge acquisition, the state-of-the-art approach is to use information extraction techniques to mine the text of Web contents~\cite{miningsubjective}. However, 
it only relies on machine-based technique and online Web data, and does not consider to incorporate the wisdom of the crowd and existing KB information. Thus, the precision is far from satisfactory, i.e. \textsc{surveyor} has the precision of 77\%~\cite{miningsubjective}.
To the best of our knowledge, we are the first to leverage the collaborative knowledge from both the crowd and existing KBs to perform subjective knowledge acquisition and KB enrichment in the subjective dimension.

\textbf{Challenge}. Leveraging the power of the crowd for knowledge acquisition comes with the challenge of ``How to resolve the conflict between large scale knowledge facts and the limited crowdsourcing resource?''. 
Real world KBs are often in very large scales, e.g., YAGO has 2,747,873 entities and 292,898 types, DBpedia has 2,531,369 entities and 827 types, while each crowdsourcing operation is associated with a monetary cost and is somewhat time-consuming. Therefore, it is infeasible and costly to ask the crowd to carry the whole burden of subjective knowledge acquisition task. 
In our system \textit{CoSKA}, we make use of the knowledge in existing KBs and the semantic relationship among subjective properties to perform knowledge inference and based on the inference power, the most beneficial questions are identified for crowdsourcing.
\begin{figure}
	\includegraphics[width=0.44\textwidth,height=0.18\textwidth]{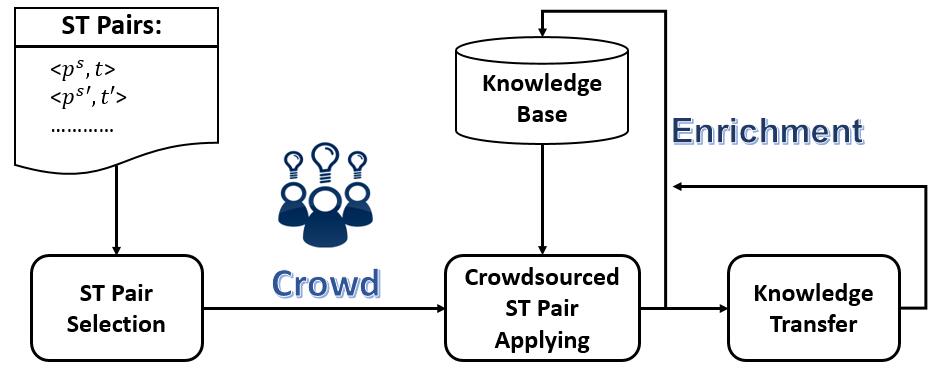}
	\caption{Framework of \textit{CoSKA}.}
	\label{fig:framework}
	\vspace{-2ex}
\end{figure}

\textbf{Framework}. The input of \textit{CoSKA} is a list of ST pairs mined from the corpus and a  KB. The output is a list of subjective knowledge facts and enriched KB. \textit{CoSKA}  consists of three stages: ST pair selection, crowdsourced ST pair applying, and knowledge inference. 
The details of the framework is shown in Figure~\ref{fig:framework}.

\textbf{1) ST Pair Selection}: Given the large amount of ST pairs, we need to identify the benefit of each pair and select them judiciously for subsequent crowdsourced ST pair applying as the process needs the involvement of crowd workers. We first define some subjective knowledge inference rules. Then
the ST pair selection problem is formulated as a \textit{Maximum Knowledge Inference Problem}. We show that the problem is NP-hard and propose a diversity-aware forward greedy algorithm for ST pair selection. 

\textbf{2) Crowdsourced ST Pair Applying}: For each selected ST pair, the task of subjective knowledge acquisition is to identify the opinions that whether the subjective property can be applied to the instances of the type powered by crowdsourcing, referred as crowdsourced ST pair applying.  However, asking the crowd for every instance is still too costly as a type could contain hundreds of thousands instances. In order to improve the scalability of the knowledge acquisition task, we formulate the crowdsourced ST pair applying as a binary classification problem. The objective knowledge of instances in existing KBs is selected as the features. 
We adopt a representative sampling strategy  to sample a set of instances to ask the crowd and the classifier is trained based on the collected answers.

\textbf{3) Knowledge Inference}: After the crowdsourced ST pair applying process, we have acquired a set of subjective knowledge facts. To further improve the scalability of our system and derive more subjective knowledge facts, we perform knowledge inference based on the subjective inference rules. 

The acquired and inferred knowledge can be encoded into existing KBs to perform KB enrichment in the subjective dimension. 

In summary, the contributions of our work are as follows:
\begin{itemize}
\item We propose the problem of crowdsourced subjective knowledge acquisition and perform knowledge base enrichment in the subjective dimension, which bridges the gap between the subjective queries and existing knowledge bases encoding only objective knowledge.

\item  We describe and implement our \textit{CoSKA} system, consists of ST pair selection, crowdsourced ST pair applying and knowledge inference, for crowd-powered subjective knowledge acquisition.

\item  We define subjective knowledge inference rules among ST pairs and formulate the ST pair selection problem as a \textit{Maximum Knowledge Inference Problem}. We prove the problem is NP-hard and propose a diversity-aware forward greedy algorithm for ST pair selection.

\item  To further resolve the conflict between large scale knowledge facts and the limited crowdsourcing resource, we formulate the crowdsourced ST pair applying problem as a classification task and derive more knowledge facts based on the crowdsourced seed knowledge. 

\item We conduct extensive experiments using real large-scale knowledge base and crowdsourcing platform and verify the effectiveness of \textit{CoSKA} system. 

\end{itemize}


The rest of the paper is organized as follows. In Section~\ref{sec:definition}, we introduce  preliminaries and give the formal definitions of subjective knowledge acquisition and enrichment. 
In Section~\ref{sec:pair_selection}, we present the methodology for ST pair selection. In Section~\ref{sec:crowdsourcing_knowledge}, we describe the models for crowdsourced ST pair applying.
The crowdsourcing mechanism design is illustrated in Section~\ref{sec:crowd}. 
 Section~\ref{sec:experiment} shows the experimental results on real KBs and crowdsourcing platform. The related works are introduced in Section~\ref{sec:related}. We conclude our work in Section~\ref{sec:conclusion}.

\section{Problem Definition}
\label{sec:definition}
A knowledge base is a repository of storing entities and relations in a real world scenario. Similar with~\cite{DBLP:journals/corr/abs-1207-4525},  knowledge base is formally defined as follows.
\begin{definition}[Knowledge Base]
	\label{def:knowledgebase}
A knowledge base KB is a tuple denoted by $(E,L,R,P)$, consisting of a collection of entities $E$, literals $L$, relations $R$ holding between entities, and properties $P$ holding between entities and literals. An entity $e \in E$ can be a class or an instance.
\end{definition}

Figure~\ref{fig:knowledge} shows a toy example of a knowledge base. 
There are six entities - three classes, e.g., ``lawyer", ``politician", and ``president", and three instances, e.g., ``Obama", ``Michelle ", and ``Gorge W. Bush"; the date ``1961-8-4"  and string ``Barack Obama"  are literals; there are three relations (``type", ``married", and ``subclassOf'') and two kinds of properties (``birthDate" and ``fullName"). 

\begin{figure}
	\centering
	\includegraphics[width=0.4\textwidth]{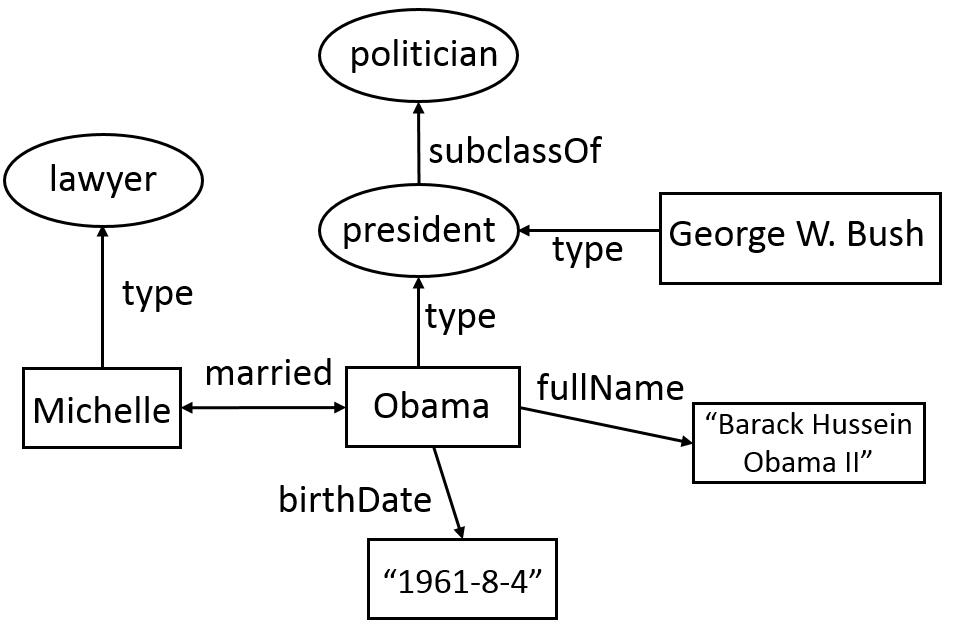}
	\caption{An example of knowledge base.}
	\vspace{-2ex}
	\label{fig:knowledge}
\end{figure} 

\begin{definition}[Objective Knowledge]
Objective knowledge is a fact of triple $<s,p^o,o>$, where $s$ is an entity in a knowledge base, $p$ is an objective property, and $o$ is either an entity or a literal. Objective knowledge recording the real world facts, which is factual and observable.
\end{definition}

As shown in the toy example of a KB of Figure~\ref{fig:knowledge}, there are eight objective knowledge facts, e.g. $<Obama, birthDate, ``\text{\textit{1961-8-4}}''>$ and $<president, subclassOf,  policitian>$.

As mentioned in Section~\ref{sec:introduction}, the subjective knowledge refers to the dominant opinion about whether a particular subjective property applies to entities of a particular type~\cite{miningsubjective}. Therefore, the combination of a subjective property and a certain type should be figured out for subsequent subjective knowledge acquisition. We define it as the \textit{ST pair} (subjective property-type pair).
\begin{definition}[ST Pair]
	An ST pair consists of a subjective property  and a type, which corresponds to a class entity in the knowledge base, i.e. $ST=(p^s,T)$. An ST pair for knowledge acquisition task indicates that the subjective property $p^s$ can be applied to the type $T$, namely, can be applied to the instances of the type $T$. 
\end{definition}

For example, an ST pair, $ST= (big, City)$ indicates that the entities of the $City$ type have the subjective property of $big$. Same for the ST pairs like $(cute, Animal)$,  $(popular, Sport)$ and etc. 

\begin{definition}[Subjective Knowledge]
A subjective knowledge fact is a triple denoted by $<s,ST,l>$, where $s$ is an entity in a knowledge base, $ST$ is an ST pair consists of a subjective property and a type, and $l$ is a label with value either be $true$ or $false$. 
\end{definition}

The subjective knowledge has no ground truth, instead it has a \textit{dominant opinion} which can be used to derive such knowledge. For example, if most people hold the opinion that ``New York is a big city'', then we can derive a new subjective knowledge fact $<New York, (big,City), true>$; otherwise, we will derive a new subjective knowledge fact $<New York, (big,City), false>$.

\begin{definition}[ST Pair Applying]
	Given an ST pair, $ST=<p^s, T>$, consists of a subjective property $p^s$ and a type $T$ , and a knowledge base $KB$, ST pair applying refers to the process of deciding whether $p^s$ can be applied to the instances of type $T$ in the $KB$, and the result is a list of subjective knowledge facts, $\mathcal{F}^s =\{F_1,F_2,\cdots, F_m\}$, where $F_i=\{e_i,ST,l\}$. 
\end{definition}

Given a list of ST pairs, subjective knowledge acquisition refers to the process of performing ST pair applying for all the input ST pairs. The derived knowledge can be encoded into the existing KB to perform KB enrichment in the subjective dimension.

\begin{definition}[Subjective Knowledge Enrichment]
\label{def:problem_definition}
Given a knowledge base $KB$ consisting objective knowledge facts $\mathcal{F}^O$, a list of 
$ST$ pairs, $\mathcal{ST}=\{{ST}_1,{ST}_2,\cdots,{ST}_m\}$, the target is to enrich the $KB$ with a list of subjective knowledge facts $\mathcal{F}^S$ by performing subjective knowledge acquisition for the ST pairs.
\end{definition}



We employ crowdsourcing for the subjective knowledge acquisition, specifically for the ST pair applying process. 
 Due to the limited crowdsourcing resource, we need to crowdsource in an efficient and productive manner. In other words, for crowdsourced subjective knowledge acquisition, our target is to maximize the acquired knowledge under the crowdsourcing budget.
Next, we define the \textit{Crowdsourced Subjective Knowledge Acquisition} problem.

\begin{definition}[Crowdsourced Subjective Knowledge Acquisition]
Given a list of ST pairs, a knowledge base and a crowdsourcing budget $k$ (e.g. the number of crowdsourcing operation or monetary budget). The \textit{Crowdsourced Subjective Knowledge Acquisition} (\textit{CoSKA}) problem is to perform ST pair applying operations to acquire new subjective knowledge facts powered by crowdsourcing. The target is to maximize the number of derived knowledge facts under the given budget $k$.
\end{definition}
As described in Section~\ref{sec:introduction}, we propose a three stage approach for \textit{CoSKA}: ST pair selection, crowdsourced ST pair applying and knowledge inference. In next sections, we illustrate the details of each stage.



\section{ST Pair Selection}
\label{sec:pair_selection}
In this section, we describe the ST pair selection problem concerning the knowledge inference power. 
We first introduce the ST pair extraction method; then, we introduce the \textit{subjective resemble relationship} among ST pairs and define the knowledge inference rules based on the relationship. We then formulate the ST pair selection problem as a \textit{Maximum Knowledge Inference Problem}
 which is NP-hard and propose a diversity-aware forward greedy algorithm for ST pair selection.

\subsection{ST Pair Extraction}
\label{subsec:st_extraction}
As described, the input of our system \textit{CoSKA} is a set of ST pairs, and an ST pair consists of a subjective property which is usually an ``adjective'' and a type which is usually a ``noun phrase'' and corresponds to a type (class) entity in the KB. For example, a pair of $(big, city)$ is an ST pair as the $big$ is an adjective and $city$ can be mapped to a class entity in the knowledge base. In order to derive the commonly used ST pairs, we perform extraction from the news from New York Times.
We use three years' data which contains 
 167,958 news and 582,898,171 sentences. We process the data using NLP tools to identify adjective and noun phrase pairs. Similar with work~\cite{miningsubjective}, we use the synthetic patterns to extract information, i.e. ST pairs, from matched sentences. The pattern we adopted is shown in Figure~\ref{fig:pattern}. For example, given a sentence of ``\textit{Snakes are dangerous animals}'', an ST pair of $<dangerous,animal>$ are extracted and the ST pair of $<successful,film>$ can be extracted from sentence ``\textit{Titanic is the most successful film of all time}''.

\begin{figure}
	\centering
	\includegraphics[width=0.25\textwidth]{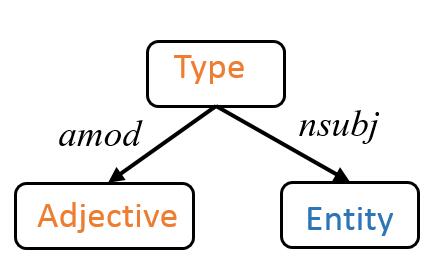}

	\caption{ST Pair Extraction Pattern.}
	\label{fig:pattern}
		\vspace{-2ex}
\end{figure}

After extraction using the pattern, we map the type from the extracted pairs to the given knowledge bases through textual similarity and filter out pairs that have no mapped class entity. 
In total, there are 40,582 mapped ST pairs with DBpedia.


\subsection{ST Pair Selection}
\label{subsec:grouping}
In order to reduce the number of ST pairs for crowdsourced subsequent ST pair applying and identify the most productive ST pairs in terms of the knowledge inference power. We define the \textit{Subjective Resemble Relationship} among ST pairs as follows:
\begin{definition}[ST Pair Subjective Resemble Relationship]
	\label{def:resemble}
	Given two ST pairs , $ST_1=(p_1^s,T_1)$, $ST_2=(p_2^s, T_2)$, a knowledge base $KB$ and an object $e$,  we define that $ST_1$ and $ST_2$ have the subjective resemble relationship on $e$, denoted as $ST_1 \approx_e ST_2$  if the following condition satisfies:
	\begin{itemize}
		\item Object $e$ is an instance of both types , i.e., $e\in I_{kb}(T_1)\land e\in I_{kb}(T_2)
		$, where $I_{kb}(T)$ denotes the instances of type $T$ in the $KB$.
		\item There exist a ``subclassOf'' relationship among two types in the $KB$, denoted as $<T_1,subclassOf,T_2>\in \mathcal{F}(KB) \lor$ $<T_2,subclassOf,T_1>\in \mathcal{F}(KB)$.
		\item If the two subjective properties are the same, synonymous or antonymous, denoted as $p_1^s\approx p_2^s$. If two subjective properties are synonymous or same, we have $p_1^s\approx^+ p_2^s$ and $p_1^s\approx^- p_2^s$ for antonymous.
	\end{itemize}
	Note that there are two kinds of subjective resemble relationships, $ST_1 \approx^+_e ST_2$ and $ST_1 \approx^-_e ST_2$ , we have $ST_1 \approx^+_e ST_2$ if $p_1^s\approx^+ p_2^s$ and $ST_1 \approx^-_e ST_2$ if $p_1^s\approx^- p_2^s$.
\end{definition}

If two ST pairs have the subjective resemble relationship on an entity, we can perform knowledge inference using the \textit{knowledge inference rule}:

\begin{lemma}[Knowledge Inference Rule]
	\label{lem:inference_rule}
	If we have a knowledge fact of $F=\{e,ST_1,l\}$ and two ST pairs where $ST_1=<p_1^s,T_1>$, $ST_2=<p_2^s,T_2>$:
	\begin{itemize}
		\item If $ST_1 \approx^+_e ST_2$, a new knowledge fact of $F'=\{e,ST_2,l\}$ can be inferred
		\item If $ST_1 \approx^-_e ST_2$, a new knowledge fact of $F'=\{e,ST_2,l'\}$ can be inferred, where $l'=\neg l$ 
	\end{itemize}
	
\end{lemma}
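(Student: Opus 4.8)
The plan is to establish the \emph{soundness} of the inference rule under the intended semantics of subjective knowledge, namely that the label $l$ in a fact $\{e, ST, l\}$ records the dominant opinion on whether the subjective property of $ST$ applies to $e$ viewed as an instance of the type of $ST$. Because subjective knowledge has no documented ground truth, the argument cannot be a purely logical deduction from the KB; instead I would argue that each of the two inference steps preserves (respectively inverts) the dominant opinion, by invoking the three conditions packaged in the subjective resemble relationship of Definition~\ref{def:resemble}.

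First I would fix the hypotheses: we are given $F = \{e, ST_1, l\}$ together with $ST_1 \approx_e ST_2$. By Definition~\ref{def:resemble} this unpacks into three facts — $e \in I_{kb}(T_1) \cap I_{kb}(T_2)$, a subclassOf edge between $T_1$ and $T_2$, and a synonymy/antonymy relation $p_1^s \approx p_2^s$. The membership and subclassOf conditions together guarantee that $e$ lives in a single, hierarchically consistent type context, so that the \emph{reference frame} against which the property is judged is compatible for $ST_1$ and $ST_2$; this is the point that licenses transferring an opinion from one pair to the other at all.

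Next I would split on the sign of the resemble relation. For $ST_1 \approx^+_e ST_2$ the properties are synonymous or identical ($p_1^s \approx^+ p_2^s$), so they denote the same subjective predicate on $e$; since the reference frame coincides, the dominant opinion is unchanged and the derived fact must carry the same label, giving $F' = \{e, ST_2, l\}$. For $ST_1 \approx^-_e ST_2$ the properties are antonymous ($p_1^s \approx^- p_2^s$), so they denote complementary predicates; the dominant opinion that $p_1^s$ holds of $e$ is exactly the dominant opinion that $p_2^s$ fails of $e$, forcing the label to flip to $l' = \neg l$ and yielding $F' = \{e, ST_2, l'\}$.

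I expect the main obstacle to be justifying the opinion-preservation claim itself rather than the case split, which is routine once that claim is granted. The delicate point is arguing that a subclassOf relationship — in either direction — really does make the two type contexts interchangeable for the purpose of a subjective judgment; e.g., that an opinion about $e$ as a president transfers to an opinion about $e$ as a politician, even though thresholds such as those implicit in a property like \emph{big} can shift between a subtype and its superclass. I would address this by leaning on the definition: the resemble relationship is designed precisely to bundle the conditions under which such transfer is assumed sound, so the lemma is ultimately the assertion that these bundled conditions suffice, and the proof records why each is needed — type membership for applicability, subclassOf for context compatibility, and $\approx^\pm$ for the direction of the transferred label.
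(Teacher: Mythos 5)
Your proposal is correct as far as such an argument can go, but it takes a genuinely different route from the paper, because the paper in fact offers no proof of this lemma at all: the rule is stated and then immediately illustrated by Example~\ref{exa:inference} (Obama as young president implying not an old politician; New York as big city implying large city), i.e., it is treated as a stipulated definition of the inference mechanism rather than a claim requiring justification. What you supply is the missing semantic soundness argument — unpacking the three conditions of Definition~\ref{def:resemble}, arguing that synonymy preserves the dominant opinion while antonymy inverts it, with type membership and the subclassOf edge securing a compatible reference frame. The value of your approach is exactly the point you flag as delicate: it makes explicit that transfer across a subclassOf edge (in either direction) is a modeling assumption, not a logical consequence, since the implicit threshold of a property like \emph{big} can shift between a type and its supertype (a big \emph{city} need not be a big \emph{settlement} by the same standard); the paper silently absorbs this assumption into the definition, whereas your proof surfaces it and records why each bundled condition is needed. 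The paper's own experiments corroborate your caveat: the inferred facts in Table~\ref{tab:exp_KI} have accuracy strictly below 1 (0.76--0.93), which is precisely the behavior expected of a useful heuristic rather than an exact inference rule. In short, your argument is as complete as the lemma admits and is more rigorous than what the paper provides; just be aware that, strictly speaking, you proved soundness of a modeling convention, not a theorem with independent content.
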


Next, we illustrate the \textit{ST Pair Subjective Resemble Relationship} and the \textit{Knowledge Inference Rule} through the following example.
\begin{example}
	\label{exa:inference}
	Given a knowledge base $KB$, four ST pairs, i.e., $ST_1=<old,Politician>, ST_2=<young,President>, ST_3=<big,City>, ST_4=<large, City>$ and four instances $e_1$=\{``Hillary Clinton''\}, $e_2$=\{``Barack Obama''\} , $e_3$=\{``New York''\} and $e_4$=\{"Los Angeles"\}. Referring to the knowledge in KB, we have that: $e_1$ is an instance of type ``Politician'', $e_2$ is an instance of type ``Politician'' and type ``President'', and $e_3, e_4$ are instances of the type ``City'', denoted as  $Type(e_1)$=\{``Politician''\}, $Type(e_2)$=\{``Politician'',``President''\}, $Type(e_3)$=$Type(e_4)$=\{``City''\}. Therefore, based on the Definition~\ref{def:resemble}, we can have the following ST pair subjective resemble relationships:
	\begin{itemize}
			\item[1).] $ST_1\approx^-_{e_2} ST_2$, as ``young'' and ``old'' are antonymous of each other, $e_2$ is an instance of both type ``President'' and ``Politician'' and ``President'' is a subclass of ``Politician'' ;
			\item[2).] $ST_3\approx^+_{e_3} ST_4$, as ``big'' and ``large'' are synonymous and the type of two ST pairs is ``City'' and $e_3$ is an instance of ``City''; 
			\item[3).] Similarly, we also have $ST_3\approx^+_{e_4} ST4$
	\end{itemize}

	Based on the ST pair subjective resemble relationships, we can infer new knowledge facts using the inference rule according to Lemma~\ref{lem:inference_rule}. If we have crowdsourced  subjective knowledge facts of $\mathcal{F}_{cr}=\{<e_2,ST_2,YES>,<e_3,ST_3,YES>\}$ (``Barack Obama is  a young president'' and ``New York is a big city''), we can get $F_{inf}=<e_2,ST_1,NO>$ (``Barack Obama is NOT an old politician'') and $F_{inf}=<e_3,ST_4,YES>$ (``New York is a large city'').
\end{example}

Based on the \textit{subjective resemble relationship} and the inference rule, we can construct a graph to model the ST pair subjective resemble relationship and the knowledge inference power.

\begin{definition}[ST Graph Model]
	\label{def:ST_Graph_Model}
	Given a knowledge base $KB=(E,L,R,P)$ and a set of ST pairs $\mathcal{P}=\{ST_1,ST_2,\cdots, ST_n\}$, we can construct a weighted graph $\mathcal{G}=\{\mathcal{V},\mathcal{W}\}$, where
	\begin{itemize}
		\item  Each vertex in $\mathcal{V}$ corresponds to an ST pair of $\mathcal{P}$.
		\item  There is an undirected edge $w_{ij}$ between node $v_i$ ($ST_i$) and $v_j$ ($ST_j$) if exists an instance $e\in E(KB)$ and $ST_i\approx_e ST_j$.
		\item  The weight of $w_{ij}$ is the number of entities on which two corresponding ST pairs have the subjective resemble relationship, denoted as $w_{ij}=|\mathcal{E}|$ where, $\forall e\in \mathcal{E}, ST_i\approx_e ST_j$.
	\end{itemize}
	  
\end{definition}  

For the example shown in Example~\ref{exa:inference}, we can have an ST graph with four vertices and two edges: $\mathcal{G}=\{\{v_1,v_2,v_3,v_4\}, \{w_{12},w_{34}\}\}$, where $v_i$ corresponds to $ST_i$ and $w_{12} =1, w_{34}=2$. 

Our target of ST pair selection is to identify the most beneficial ST pairs for subsequent crowdsourced ST pair applying to increase the acquired subjective knowledge facts.
In our work, based on the ST pair subjective resemble relationship and knowledge inference rules, we use the  number of  knowledge facts that can be inferred, i.e. the inference power, to measure the beneficial of selected ST pairs.
Therefore, we formulate the ST pair selection problem as a  \textit{Maximum Knowledge Inference Problem}.

\begin{definition}[Maximum Knowledge Inference Problem]
\label{def:st_selection}
Given a knowledge base $KB$ and a set of ST pairs, $\mathcal{P}=\{ST_1,ST_2,\cdots,ST_m\}$, the target is to select $k$ ST pairs to maximize the knowledge inference power.  
\end{definition}

Based on the \textit{ST Graph Model} defined in~\ref{def:ST_Graph_Model}, the \textit{maximum knowledge acquisition problem} is to select a set of nodes in the graph that maximize the total edge weight induced by the nodes. We can prove that the \textit{Maximum Knowledge Inference Problem} is NP-hard by a reduction 
\textit{Densest k-Subgraph problem}.
\begin{theorem}
	\label{the:proof_np}
	The Maximum Knowledge Inference Problem is NP-hard
\end{theorem}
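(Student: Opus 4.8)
The plan is to prove NP-hardness by a polynomial-time reduction from the Densest $k$-Subgraph problem (DkS), which is well known to be NP-hard (it contains CLIQUE as a special case, since a graph has a $k$-clique iff its densest $k$-subgraph has $k(k-1)/2$ edges). The first observation I would make is that, by the ST Graph Model of Definition~\ref{def:ST_Graph_Model}, an instance of the Maximum Knowledge Inference Problem is completely determined by its weighted graph $\mathcal{G}=\{\mathcal{V},\mathcal{W}\}$, and by Definition~\ref{def:st_selection} the goal of choosing $k$ ST pairs to maximize inference power is precisely the goal of choosing $k$ vertices that maximize the total weight of the induced edges. Thus the objective already matches DkS verbatim, and the only thing left to show is that an \emph{arbitrary} graph can be realized as the ST graph of some knowledge base together with a set of ST pairs.

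Next I would give the construction. Given a DkS instance consisting of an undirected graph $G=(V,E)$ and a budget $k$, I build a knowledge base and a set of ST pairs as follows. For every vertex $v_i\in V$ I introduce a distinct class entity (type) $T_i$ and an ST pair $ST_i=(p,T_i)$ in which all pairs reuse a single fixed subjective property $p$; this forces the third clause of Definition~\ref{def:resemble} (same/synonymous property, hence a $\approx^+$ relationship) to hold for every pair, removing it from consideration. For every edge $(v_i,v_j)\in E$ I introduce one fresh instance $e_{ij}$ declared to be an instance of both $T_i$ and $T_j$, and I assert the single fact $\langle T_i,\mathrm{subclassOf},T_j\rangle$ in the knowledge base. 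The construction creates $|V|$ types, $|V|$ ST pairs, $|E|$ instances and $|E|$ subclassOf facts, so it runs in time polynomial in $|V|+|E|$.

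I would then argue that the resulting ST graph is isomorphic to $G$ with unit edge weights. For each intended edge $(v_i,v_j)\in E$, the shared instance $e_{ij}$ satisfies all three clauses of Definition~\ref{def:resemble} simultaneously --- membership in both types, an asserted subclassOf relation between $T_i$ and $T_j$, and a (trivially) matching property --- so $ST_i\approx^+_{e_{ij}}ST_j$ and $\mathcal{G}$ carries the corresponding edge of weight $1$. Conversely, for any non-edge $\{v_i,v_j\}\notin E$ the reduction asserts \emph{no} subclassOf fact between $T_i$ and $T_j$ in either direction, so the second clause of the resemble relationship fails and $\mathcal{G}$ has no such edge. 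Hence the induced edge weight of any $k$ chosen ST pairs equals the number of edges of $G$ induced by the corresponding $k$ vertices, so an optimal solution to the Maximum Knowledge Inference Problem and an optimal DkS solution transfer back and forth; applied to the decision versions ($\ge W$ induced weight versus $\ge W$ induced edges) this gives the desired hardness.

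I expect the main obstacle to be the soundness direction --- certifying that $\mathcal{G}$ contains \emph{no} spurious edges. Because the resemble relationship is a conjunction of three conditions and the global reuse of $p$ trivializes the property clause, the whole argument rests on controlling the remaining two clauses. Keeping the asserted subclassOf facts \emph{edge-local} (one per edge of $G$, and never for a non-edge) is what guarantees the absence of extra edges, and this guard is robust: even if one worried that subclassOf chains might let an instance be inherited into additional types, the second clause checks the explicitly stored fact $\langle T_1,\mathrm{subclassOf},T_2\rangle\in\mathcal{F}(KB)$, which we never insert for non-edges, so no unintended pair can resemble. In writing this up I would therefore fix the reading that $\mathcal{F}(KB)$ and $I_{kb}(\cdot)$ denote the asserted (non-deductively-closed) facts, consistent with Definition~\ref{def:knowledgebase} treating the KB as a plain repository. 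Finally, I would remark that the same construction proves hardness of the \emph{weighted} problem as well: replacing the single instance $e_{ij}$ by $w_{ij}$ fresh copies sets the weight of edge $\{i,j\}$ in $\mathcal{G}$ to exactly $w_{ij}$, reducing from weighted DkS.
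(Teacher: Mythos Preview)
Your proposal is correct and follows essentially the same approach as the paper, namely a polynomial reduction from Densest $k$-Subgraph in which each vertex becomes an ST pair and each edge of $G$ is realized by a single shared instance witnessing a unit-weight resemble relation, so that induced edge weight in the ST graph equals the induced edge count in $G$. Your writeup is in fact more careful than the paper's: the paper simply asserts the one-to-one correspondence between edges of $G$ and resemble relations without spelling out the explicit KB construction (types, shared property, per-edge instances and subclassOf facts) or the soundness direction you take pains to address.
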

\begin{proof}
Given an undirected graph $G=(V, E)$, the Densest k-Subgraph
(DkS) problem on $G$ is the problem of finding a subset $U\subseteq V $of vertices
of size $k$ with the maximum induced average degree. The average degree of the
subgraph will be denoted as
$2|E(U)|/k$. Here $|E(U)|$ denotes the number of edges in the subgraph induced by $U$. We construct the instance of ST graph as follows: Given $|V|$ ST pairs, each corresponds to a vertex in $G$, two ST pairs, $ST_i,ST_j$ have the subjective resemble relationship on a single object if there is an edge between to corresponding nodes, $v_i,v_j$ in $G$. Given the parameter $k$, the \textit{maximum knowledge inference problem} is to select $k$ nodes $S^*$ with maximum induced edge weight, $\sum_{e\in E(S^*)} W(e)=|E(S^*)|$. Therefore, under the same $k$, the optimal solution of the \textit{maximum knowledge acquisition problem} is equivalent to that of \textit{Densest k-Subgraph problem}
\end{proof}

The \textit{maximum knowledge acquisition problem} is NP-hard, a backward-greedy strategy, which repeatedly removes a vertex with the minimum weighted-degree in the remaining graph, until exactly k vertices are left, has an worst case approximation ratio of $[(\frac{1}{2}+\frac{n}{2k})^2-O(n^{-\frac{1}{3}}) , (\frac{1}{2}+\frac{n}{2k})^2+O(\frac{1}{n})]$ for $k$ in the range of $[\frac{n}{3},n]$ and $[2(\frac{n}{k}-1)-O(\frac{1}{k}) , 2(\frac{n}{k}-1)+O(\frac{n}{k^2})]$ for $k$ in the range of $[0,\frac{n}{3})$, where $n$ is the number of vertex~\cite{HkS}. However, the backward-greedy strategy is time consuming as it needs to iterate ($|V|-k$) times; moreover, the strategy does not consider the knowledge diversity, i.e. knowledge about different types, when making decisions, and therefore may results in top ST pairs share the same type. For example, in our experiment, we find that there are only two types from top 100 ST pairs by the backward-greedy strategy.  
In order to improve the efficiency and balance the subjective knowledge over various types, we propose a diversity-aware forward greedy strategy for ST pair selection: each time we select the pair with the maximum weight-degree, and add the pair to the result if the number of pairs with the same type does not exceed the given threshold.

The procedure of the diversity-aware forward greedy algorithm is illustrated in Algorithm~\ref{alg:Pair_Selection}. There are $k$ iterations (lines 2-7); in each iteration, we first pick the vertex with the maximum weight-degree (line 3); next, we check the number of vertices of the same type in the current result set, if the number dose not exceed the threshold, the vertex is added into the result set (lines 4-6). The complexity of Algorithm~\ref{alg:Pair_Selection} is $O(|\mathcal{V}|\cdot |\mathcal{W}|)$.
  
%

\begin{algorithm}[t]
\caption{Diversity-aware Forward Greedy Selection}
\label{alg:Pair_Selection}
\KwIn{ST Model Graph $\mathcal{G}=\{\mathcal{V},\mathcal{W}
\}$, Parameter $k$ and threshold $\delta$  }
\KwOut{A set of vertices $\mathcal{S}$}

$\mathcal{V} \leftarrow \emptyset$ \\
\While{$|\mathcal{V}|\leq k$}
{
		$v^*\leftarrow \arg\max_{v\in \mathcal{V}} WeightDegree(v)$ \\
		$T \leftarrow type(v^*)$\\
		\If{$Num(\mathcal{V},T)\leq \delta * k$}
		{
			$\mathcal{V} \leftarrow \mathcal{V}\cup v^*$\\
		}
		$\mathcal{V}\leftarrow \mathcal{V}\backslash v^*$\\
		
}
\Return $\mathcal{V}$
\end{algorithm}
%

\section{Crowdsourced ST Pair Applying}
\label{sec:crowdsourcing_knowledge}
\label{sec:st_apply}

For a given ST pair, the task of subjective knowledge acquisition is to identify the dominant opinion of whether the subjective property can be applied to the instances of the type, referred as \textit{crowdsourced ST pair applying}. However, asking the crowd for every instance is  too costly as a type in a KB could contain hundreds of thousands instances. Therefore,
we formulate the crowdsourced ST pair applying as a binary classification problem taking advantage of the knowledge in the KBs. For each instance, we decide whether the ST pair can be applied by the classification result. 

However, we do not have any labeled data for training the classifier. Therefore, we select a set of seed instances and ask the crowd to collect the corresponding subjective knowledge facts. We take the crowdsourced samples as the training data and train the classifier using the features extracted from the KB. As each type in a KB would have a set of properties/relations, we extract these properties/relations and list them in the crowdsourcing tasks, the crowd workers are also asked to mark which of the properties would affect the decision about whether the ST pair applies to the instances.  Then, we filter out the properties/relations with votes less than a threshold and training the classifier on the remaining properties. 

We adopt a
\textit{representative sampling} to sample the instances which explores the clustering structure of the large amount of unlabeled data and query the representative samples, i.e. samples from different clusters, as the training data. In our work, we cluster the instances using the knowledge from existing KBs and sample $k$ instances for each ST pair.

\section{Crowdsourcing Mechanism Design} 

\label{sec:crowd}
In this section, we first describe the human intelligent task (HIT) interface of \textit{CoSKA}, then we introduce the 
 answer aggregation strategy.
  
\textbf{HIT Interface}. 
Given an ST pair, we need to obtain the knowledge of \textit{whether a subjective property can be applied to the instances of the type}.
In order to reduce the cost, we design the task as a multiple choice question where each question contains 5 instances and the crowd worker is asked to select those that the given property can be applied. 
Furthermore, we list all the properties of the given type in each HIT and let the crowd worker to select the properties that would affect the decision. We compute the voting for each feature, and retain those with voting number exceeds the threshold (set through experimental studies) for further classification models. The HIT interface is shown in Figure~\ref{fig:HIT_subjective}. The illustrated HIT is for ST pair (big, City), we include five instances of the type ``City'' in each HIT, and ask the crowd to select the instances that has the attribute of ``big''. Also, there are properties related to instances of type ``City'' in the KB, e.g., \textit{Country}, \textit{areaLand}, \textit{foundingDate}, e.t.c, we list the properties and let the crowd workers to select relevant ones.  

\begin{figure}
\centering
\includegraphics[width=0.45\textwidth]{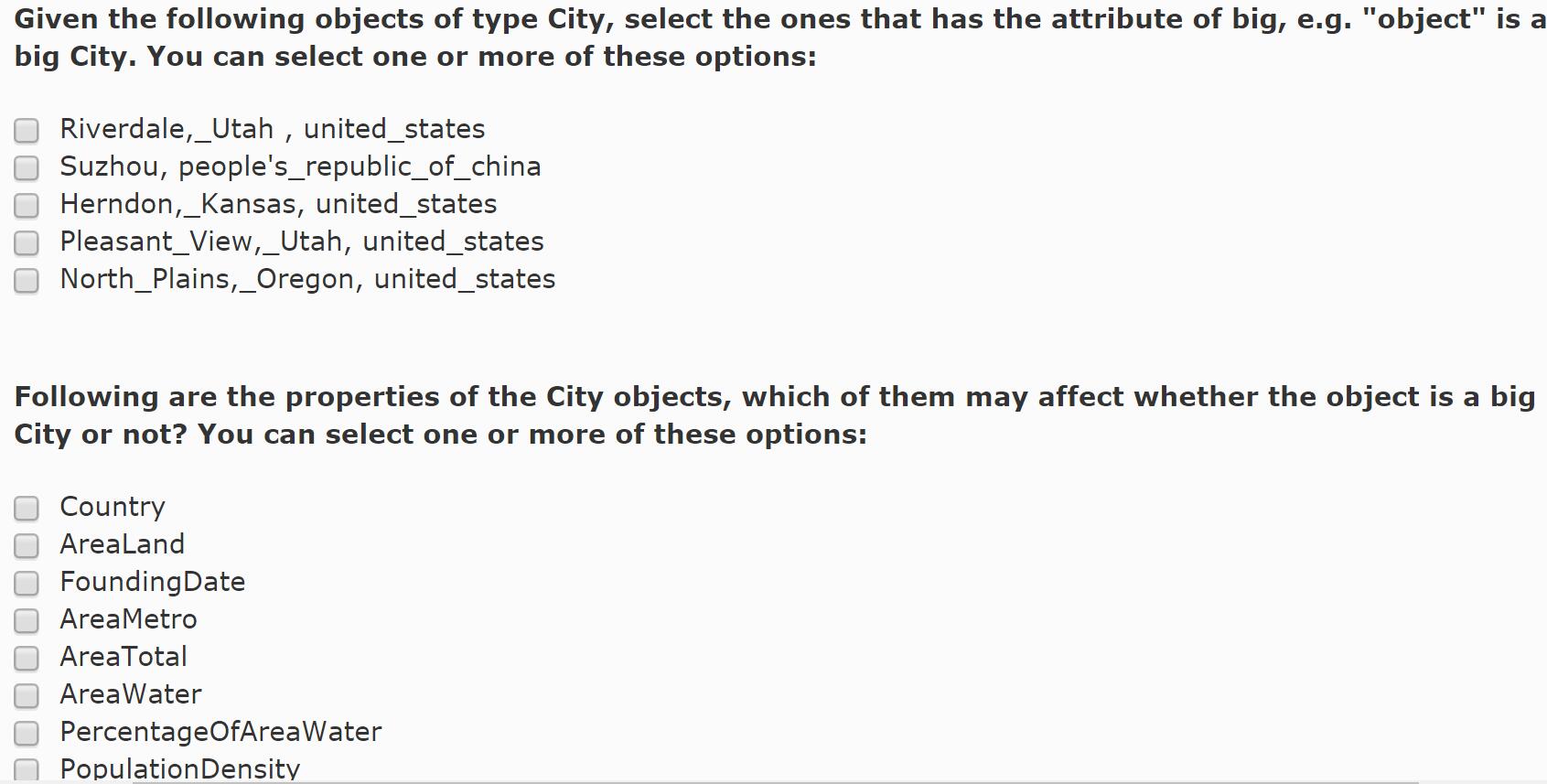}
\caption{Human Intelligent Task (HIT) interface.}
\label{fig:HIT_subjective}
\end{figure}

\textbf{Agreement-based Answer Aggregation}. After all the HITs are answered, for each selected instance sample, we can collect a set of \textit{yes} or \textit{no} answers of whether the given subjective property can be applied to it. We compute the degree of the agreement on each task: 
\begin{equation}\small
\label{equ:agreement}
\mathcal{A}(\mathcal{I},ST)=\frac{1}{|W|}\sum_{w_i\in W}  V(\mathcal{I},w_i)
\end{equation}

where $V(\mathcal{I}, w_i) = 1$ if the worker answer is \textit{yes} and 0 otherwise. The degree agreement of properties is computed in a similar way, and we retain the properties with agreement score at least $\theta_{P}$ as our subsequent classification features 

The \textit{agreement} score evaluates the confidence of the collected opinion among workers over the random answer. With the given threshold $\theta_A$ (which is set through experimental study, as the degree of agreement would vary with different ST pairs~\cite{miningsubjective}), we derive the \textit{dominant opinion}, denoted as ($\mathcal{DO}(\mathcal{I}, ST)$) of whether an ST pair $ST$ applies to the instance $\mathcal{I}$:

\begin{equation}\small
\label{equ:opinion}
\begin{split}
\mathcal{DO}(\mathcal{I}, p^s)\!=\!\begin{cases}
\text{\textit{yes}}   & \text{if} \quad \!\!\! \mathcal{A}(\mathcal{I},ST)-0.5\geq \theta_{A}\\
\text{\textit{no}}   & otherwise
\end{cases}
\end{split}
\vspace{-2ex}
\end{equation}

According to Equation~(\ref{equ:opinion}), we would obtain the positive opinion over the ST pair applies to an instance if the majority of the opinions is positive and this positive opinion has a high agreement (larger than $\theta_A$).
%
\section{Experiments}
\label{sec:experiment}
In this section, we evaluate \textit{CoSkA} on real knowledge base and crowdsourcing platform with extracted ST pairs. We describe the experimental setup in Section~\ref{subsec:experimental_setup}. Section~\ref{subsec:experimental_st} compares different methods for ST selection;  Section~\ref{subsec:experimental_crowdsourcing} verifies the proposed approaches for crowdsourced ST pair applying; Section~\ref{subsec:experiment_transfer} shows the test results of the proposed knowledge inference approach.

\subsection{Experimental Setup}
\label{subsec:experimental_setup}


\textbf{Knowledge Base}. We adopt DBpedia, which contains millions of knowledge facts (restricted to objective knowledge), classes (types) and instances as the KB in our experiments. The KB is represented as text files containing a list of triples of facts. The statistics DBpedia are given in Table~\ref{tab:statistics_kb}. The KB offers information for mapping extracted ST pair to the KB types and subjective knowledge inference.

\begin{table}
	\centering
	\caption{Statistics of  DBpedia}
	\label{tab:statistics_kb}
	\begin{tabular}{|l|l|l|l|l|} \hline
		\textbf{Dataset}  & \textbf{\#Facts} & \textbf{\#Entities} & \textbf{\#Classes} \\ \hline
	    DBpedia & 26,797,299 & 2,531,369 & 827 \\ \hline
	\end{tabular}
	\vspace{-2ex}
\end{table}

%
	
\textbf{Crowdsourcing Platform}. We use the real crowdsorucing platform, Amazon Mechanical Turk (AMT) as the platform to conduct the subjective knowledge acquisition tasks. As mentioned in Section~\ref{sec:crowd}, each HIT is designed as a multiple choice question, each question is assigned to 5 workers and
each worker would get a reward of \$0.02  for answering the task. In addition, we would pay for the AMT platform \$0.01 for each assignment. In our experimental settings, for each crowdsourced ST pair applying task, we would sample up to 200 instances.
 As illustrated in Figure~\ref{fig:HIT_subjective}, we include 5 instances in each HIT, therefore there are totally $\frac{200}{5}=40$ HITs which cost \$6. 
\subsection{ST Pair Selection}
\label{subsec:experimental_st}
As illustrated in the Section~\ref{sec:pair_selection}, we adopt a diversity-based forward greedy algorithm in our work for ST pair selection (Div-FGreedy). To evaluate the efficiency and the effectiveness of the propose algorithm, we use three metrics: 1). Induced Edge Weight of ST pairs, which indicates the inference power of selected ST pairs; 2). The number of different types of the ST pairs, which is used to evaluate the knowledge diversity; 3). Running time, which is recorded to demonstrate the efficiency of the algorithm. For comparison, we implement three other algorithms: backward greedy selection algorithm (BGreedy), forward greedy selection algorithm (FGreedy) and random selection algorithm (Random). We vary the number of selected ST pairs from 10$\sim$100, and fix the threshold for the diversity-based forward greedy algorithm (Div-FGreedy) to $0.1$ (the value of $\delta$ can be changed to satisfy the various diversity demand as the Div-FGreedy strategy can derive ST pairs with at least $\frac{1}{\theta}$ types), the results are shown in Figure~\ref{fig:exp_ST_Selection}.

 \begin{figure*}[t]
 	\centering
 	\subfigure[\small{Induced Edge Weight} ]{
 		\label{fig:exp_ST_Weight}
 		\includegraphics[width=0.27\textwidth,height=0.19\textheight]{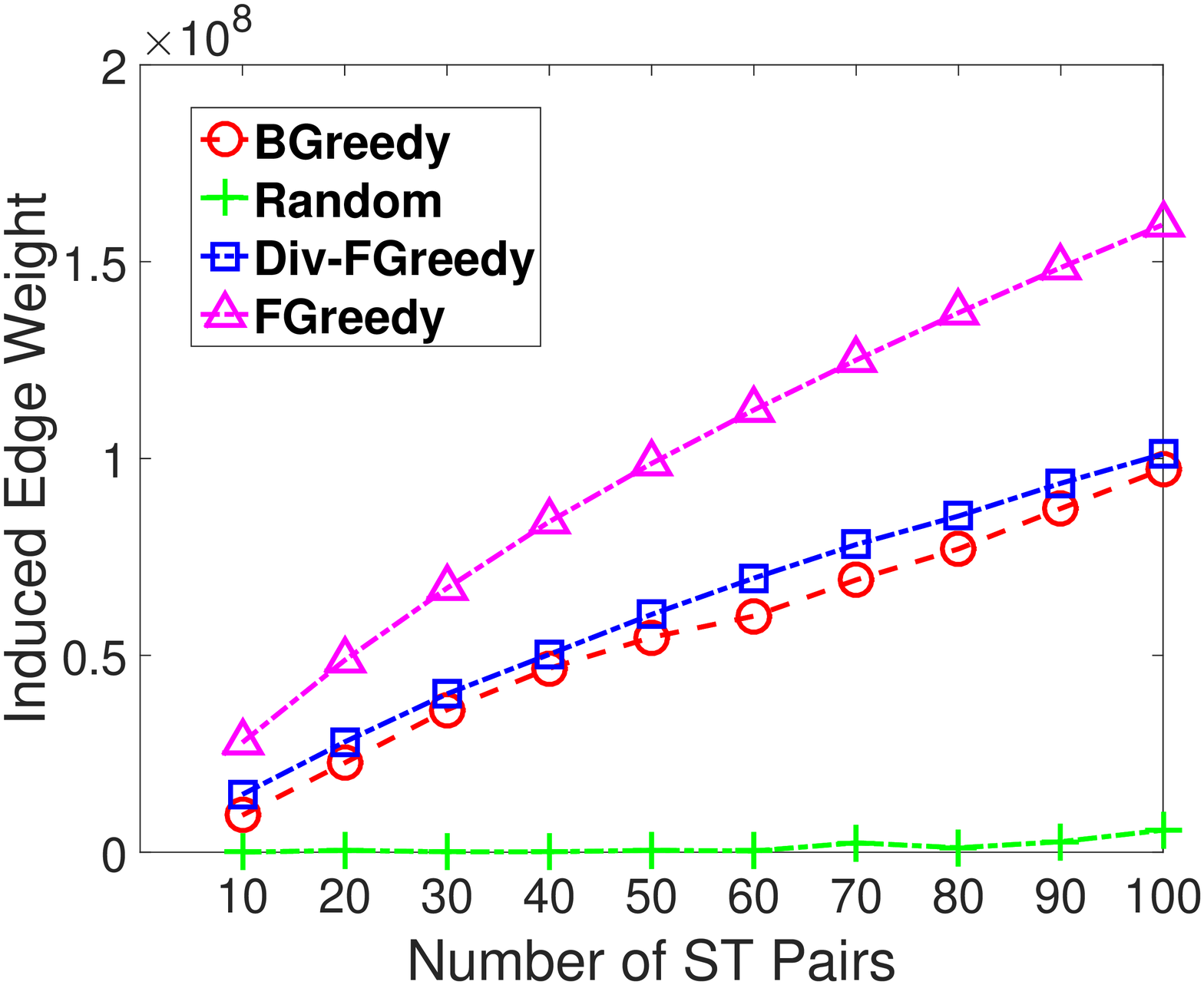}
 	}
 	\subfigure[\small{Number of Types}]{
 		\label{fig:exp_ST_Type}
 		\includegraphics[width=0.27\textwidth,height=0.19\textheight]{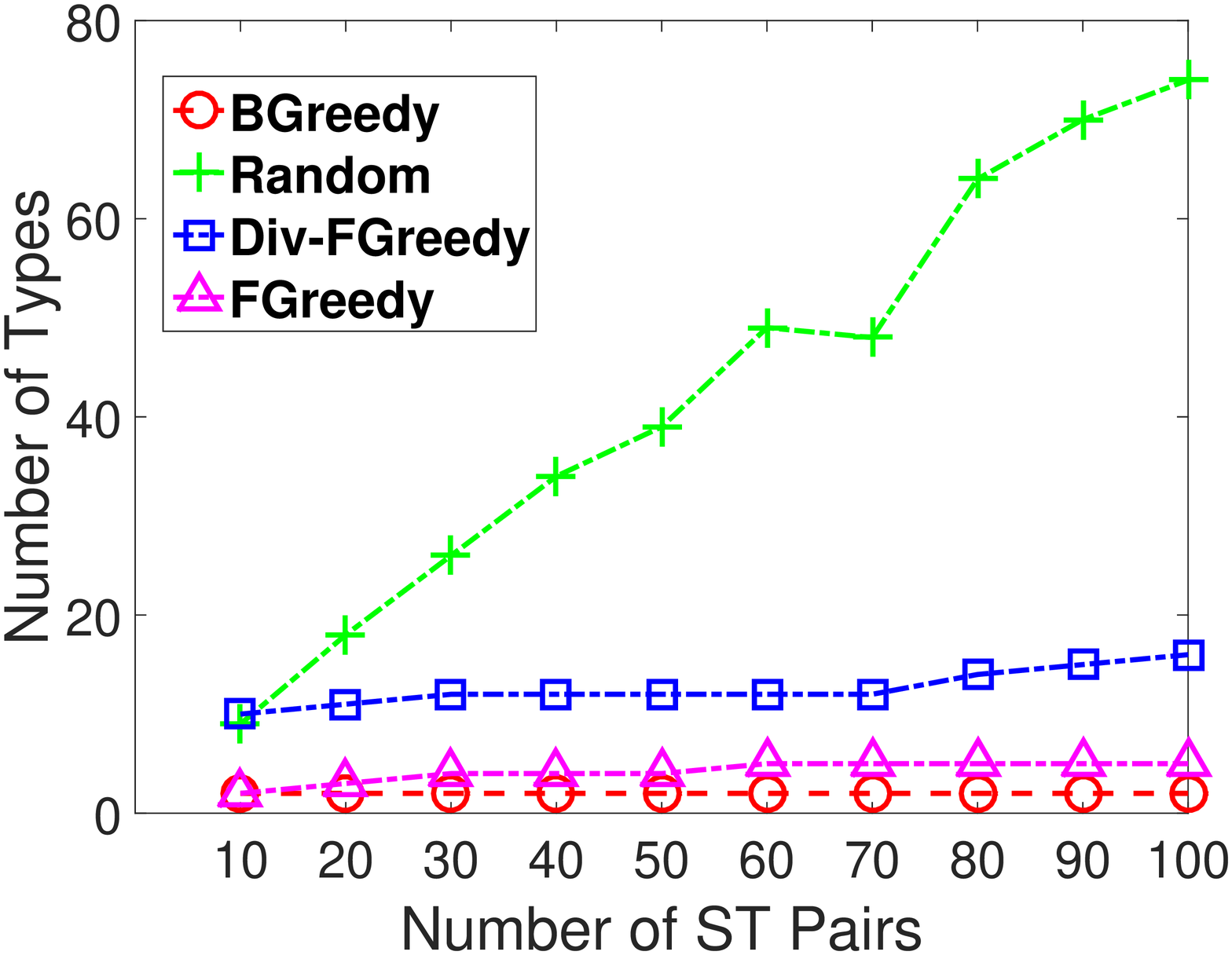}
 	}
 	\subfigure[\small{Running Time}]{
 		\label{fig:exp_ST_Time}
 		\includegraphics[width=0.27\textwidth,height=0.19\textheight]{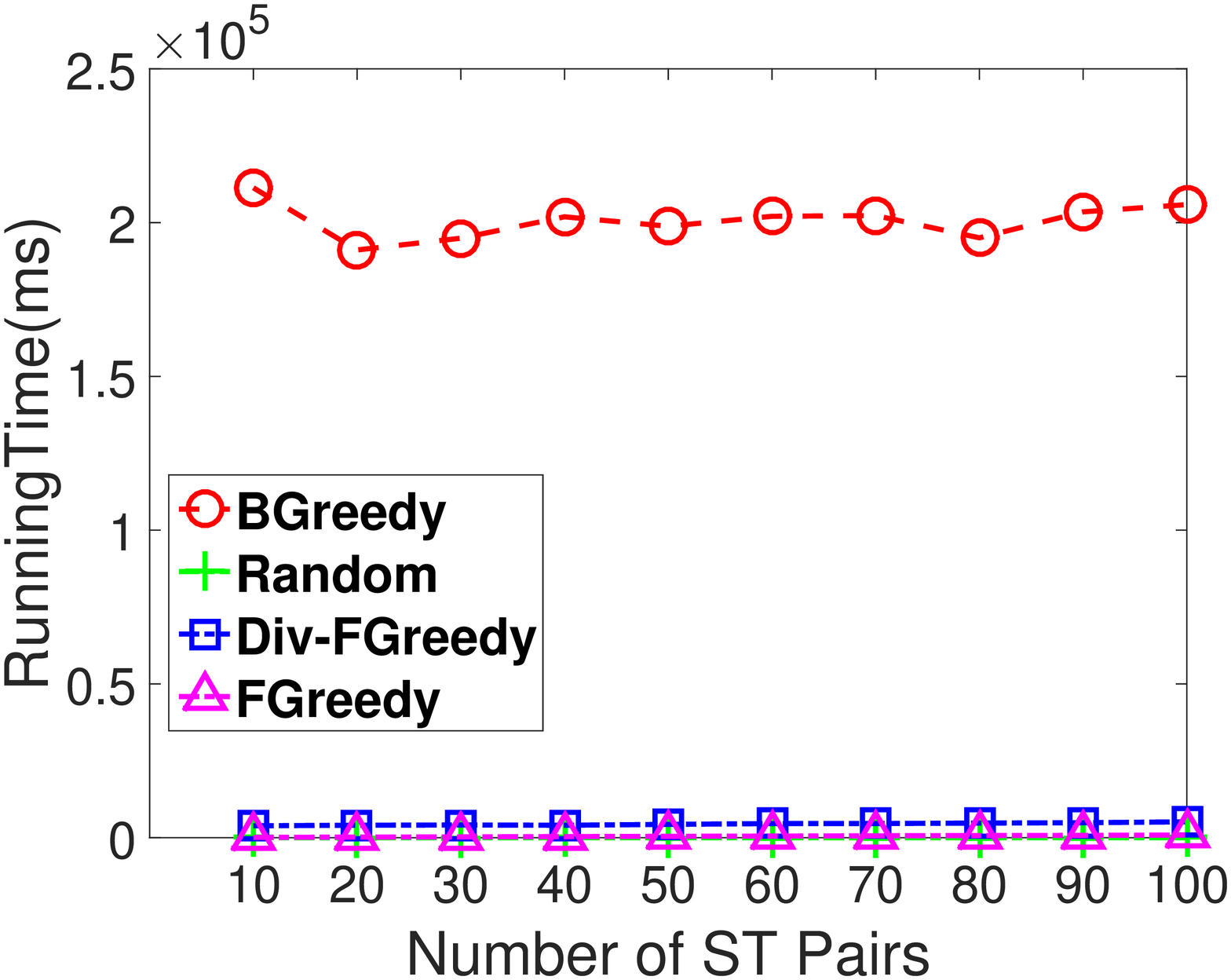}
 	}
 	 		\vspace{-2ex}
 	\caption{ST pair Selection Comparison.}
 	\label{fig:exp_ST_Selection}
 \end{figure*}

From Figure~\ref{fig:exp_ST_Weight}, we can find that the Random algorithm cannot achieve a good result, and the FGreedy and the Div-FGreedy algorithm outperform the BGreedy algorithm.
We can observe that in our experiments, the FGreedy strategy has the best performance in terms of the inference power (induced edge weight). However, from Figure~\ref{fig:exp_ST_Type}, we can see that the FGreedy algorithm would favor pairs with the same type as it does not consider the knowledge diversity, e.g. for the FGreedy algorithm, there would be only 5 types out of 100 selected ST pairs. The BGreedy strategy also has the same problem, e.g. the BGreedy only has 2 types out of 100 selected ST pairs. For the Div-FGreedy strategy, we have 16 types out of 100 selected ST pairs with the threshold set to $0.1$. We can see that the Random strategy can select pairs with larger type numbers as it selects ST pairs randomly. However, it does not consider the inference power when selecting ST pairs and thus results in ST pairs with quite low inference power as shown in Figure~\ref{fig:exp_ST_Weight}. 
For the running time shown in Figure~\ref{fig:exp_ST_Time}, we can observe that except for the BGreedy algorithm, all other three algorithms are quite efficient. To conclude, considering all three evaluation metrics, the Div-FGreedy algorithm can achieve a good inference power, guarantee the ST pair type diversity and is quite efficient.
\subsection{Crowdsourced ST Pair Applying}
\label{subsec:experimental_crowdsourcing}
For crowdsourced ST pair applying, we need to ask the crowd for a set of seed subjective knowledge facts and train the classifier based on the collected samples and features. In our experiments, we select 5 ST pairs through Div-FGreedy algorithm as test cases to evaluate the accuracy of our approach~\footnote{Note that the workflow of crowdsourced ST pair applying is same for each ST pair, to acquire more knowledge facts, we can perform crowdsourced ST pair applying for a larger number of ST pairs}. There are following configurations for the task: answer aggregation parameter $\theta_A$, feature selection parameter $\theta_P$ and classification models, the settings are illustrated in Table~\ref{tab:settings_CST}, where we mark our default settings in bold font. As we have no ground truth, we use 5-fold cross validation to test the performance of our approach.
 

\begin{table}[t]
	\centering
	\caption{Parameter Setting}
	\label{tab:settings_CST}
	\begin{tabular}{|l|l|} \hline
		\textbf{Factor}  & \textbf{Setting} \\ \hline
		$\theta_A$ & \textbf{0.1}, 0.3, 0.5\\ \hline
		$\theta_P$ & 0.1, 0.2, \textbf{0.3}, 0.4, 0.5\\ \hline
		Classifier & AdaBoost (AD), \textbf{Decision Tree (DT)}, \textbf{RBF-SVM},\\ & Nearest Neighbors (NN),  Random Forest (RF)\\ \hline
		ST Pairs & (big,City), (experienced, Athlete), (cute,Animal),\\& (old,Building),  (popular,Film)\\ \hline
	\end{tabular}
	\vspace{-2ex}
\end{table}
%

\textbf{Effect of Answer Aggregation Parameter}. There are two parameters in terms of the answer aggregation: $\theta_A$ for opinion aggregation and $\theta_P$ for feature selection. We first fix $\theta_A=0.1$ (in our settings, $\theta_A=0.1$ means at lease 60\% workers select the instance to have the given property), and vary the value of $\theta_P$ from 0.1$\sim$0.5 to compare the classification accuracy. The results are shown in Figure~\ref{fig:exp_CST_Property}~\footnote{Note that due to the space limit, we only show the results for four pairs and the result of (big,City) pair is summarized in Table~\ref{tab:exp_KI}}. From the results, we can observe that the classification accuracy would change with various $\theta_P$ value, the reason is that different $\theta_P$ have different filter power, i.e. with larger value of $\theta_P$, there would be less features remaining. Overall, the $\theta_P=0.3$ achieves best performance: from 0.1$\sim$0.3, there is an increasing trend of the accuracy for three pairs (old,Building), (experienced,Athlete) and (popular,film), and for pair (cute,Animal), the accuracy does not have much difference; for larger values (0.3$\sim$0.5), the accuracy would remain approximately the same. The reason is that the performance would change with different features (remained properties), and with lower value, less properties would be filtered therefore might retain those irrelevant properties as the training features.

Next, we check the effect of $\theta_A$ on the classification accuracy. We set the $\theta_P$ to 0.3 and vary the values of $\theta_A$ according to Table~\ref{tab:settings_CST}, the classification accuracy results are illustrated in Figure~\ref{fig:exp_CST_Opinion}. From the results we can observe that the classification models achieve best performance with the value of $\theta_A$ equals to 0.1, and would decrease as the value of $\theta_A$ increases. The reason is that with larger $\theta_A$, we have stronger restriction for deriving the dominant opinion of whether the property applies to an instance. For example, in our settings, when $\theta=0.5$, we would only obtain the opinion that the property applies to the instance if all the 5 workers gives the ``yes'' answer, which might results in missing some positive training samples and affect the classification performance. Overall, the classification models achieve best performance with the value of $\theta_A$ to 0.1. Therefore, we set the default value of $\theta_A$ to 0.1.

\textbf{Effect of Classification Model}. From the Figures~\ref{fig:exp_CST_Property} and ~\ref{fig:exp_CST_Opinion}, we can find that the performance of different classification models varies with different ST pairs. For pair (cute,Animal) different models have similar performance; for pairs (old,Building) and (experienced,Athlete) the RBF-SVM achieves the best performance whereas for (popular,film), the DT model outperforms other approaches.  We summarize the results of crowdsourced ST pair applying with the five pairs with all the parameters setting to the default value, the results are presented in Table~\ref{tab:exp_ST}. Overall, we can find that the RBF-SVM and DT model can achieve good performance for different ST pairs. 

To justify the effectiveness of our approach for subjective knowledge acquisition, we compare our results with the state-of-the-art technique, \textit{Surveyor}, proposed in~\cite{miningsubjective}. The reported accuracy of the approach by \textit{Surveyor} is 77\%. Compared our results with the \textit{Surveyor} approach, we can observe that except for the ST pair (cute,Animal), our approach can achieve better results, i.e. the pair (popular,Film) achieves accuracy of 79\% and all the other three pairs can achieve the accuracy of over 80\%. Therefore, our approach can perform accurate and scalable subjective knowledge acquisition with a low crowdsourcing budget (with up to 40 HITs and \$6 for each ST pair).

 \begin{figure*}[t]
 	\centering
 	\subfigure[\small{Pair (cute,Animal)} ]{
 		\label{fig:CST_Feature_Animal}
 		\includegraphics[width=0.23\textwidth,height=0.16\textheight]{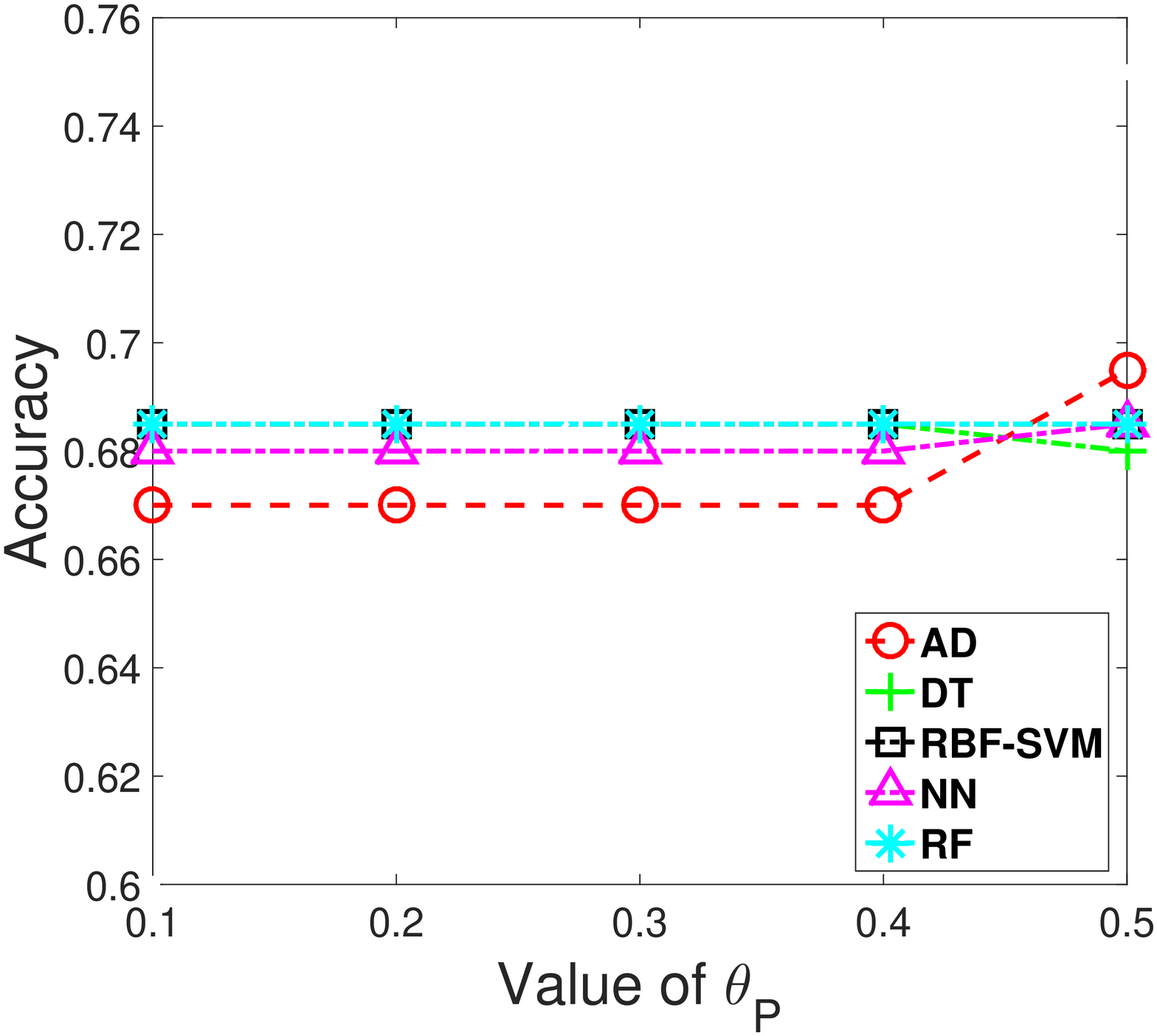}
 	}
 	\subfigure[\small{Pair (old,Building)}]{
 		\label{fig:CST_Feature_Building}
 		\includegraphics[width=0.23\textwidth,height=0.16\textheight]{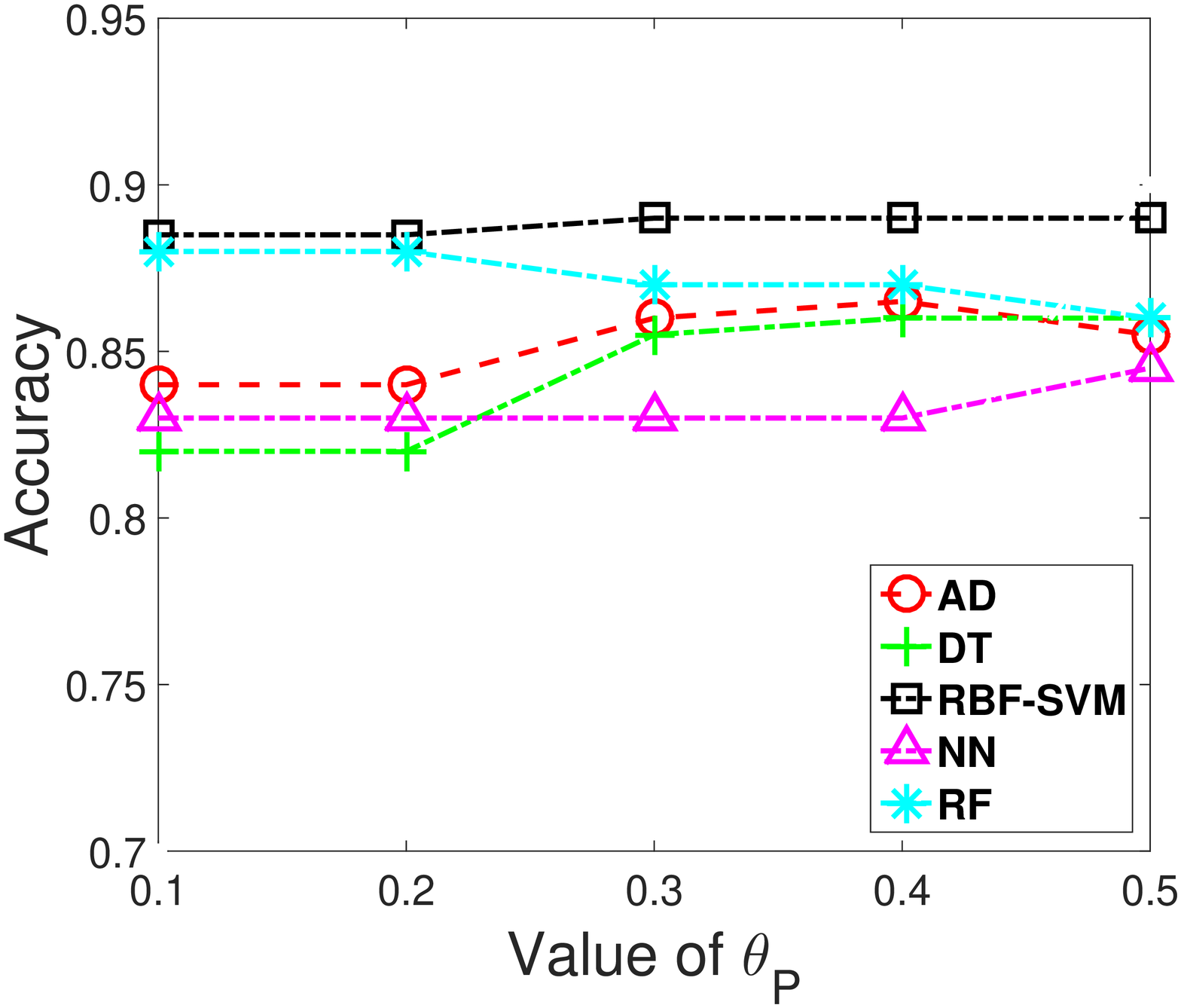}
 	}
 	\subfigure[\small{Pair (experienced,Athlete)}]{
 		\label{fig:CST_Feature_Athlete}
 		\includegraphics[width=0.23\textwidth,height=0.16\textheight]{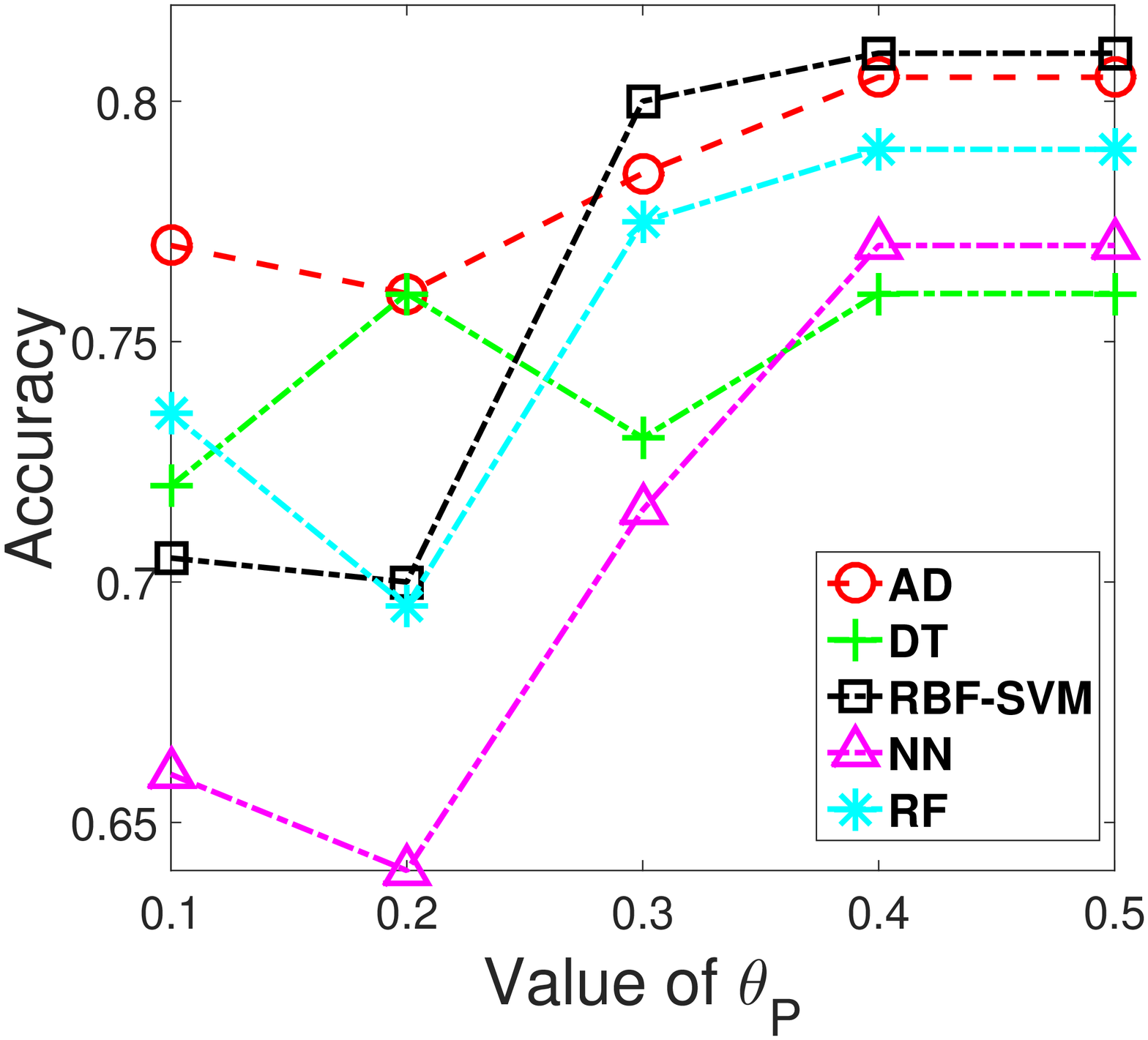}
 	}
 	\subfigure[\small{Pair (popular,Film)}]{
 		\label{fig:CST_Feature_Film}
 		\includegraphics[width=0.23\textwidth,height=0.16\textheight]{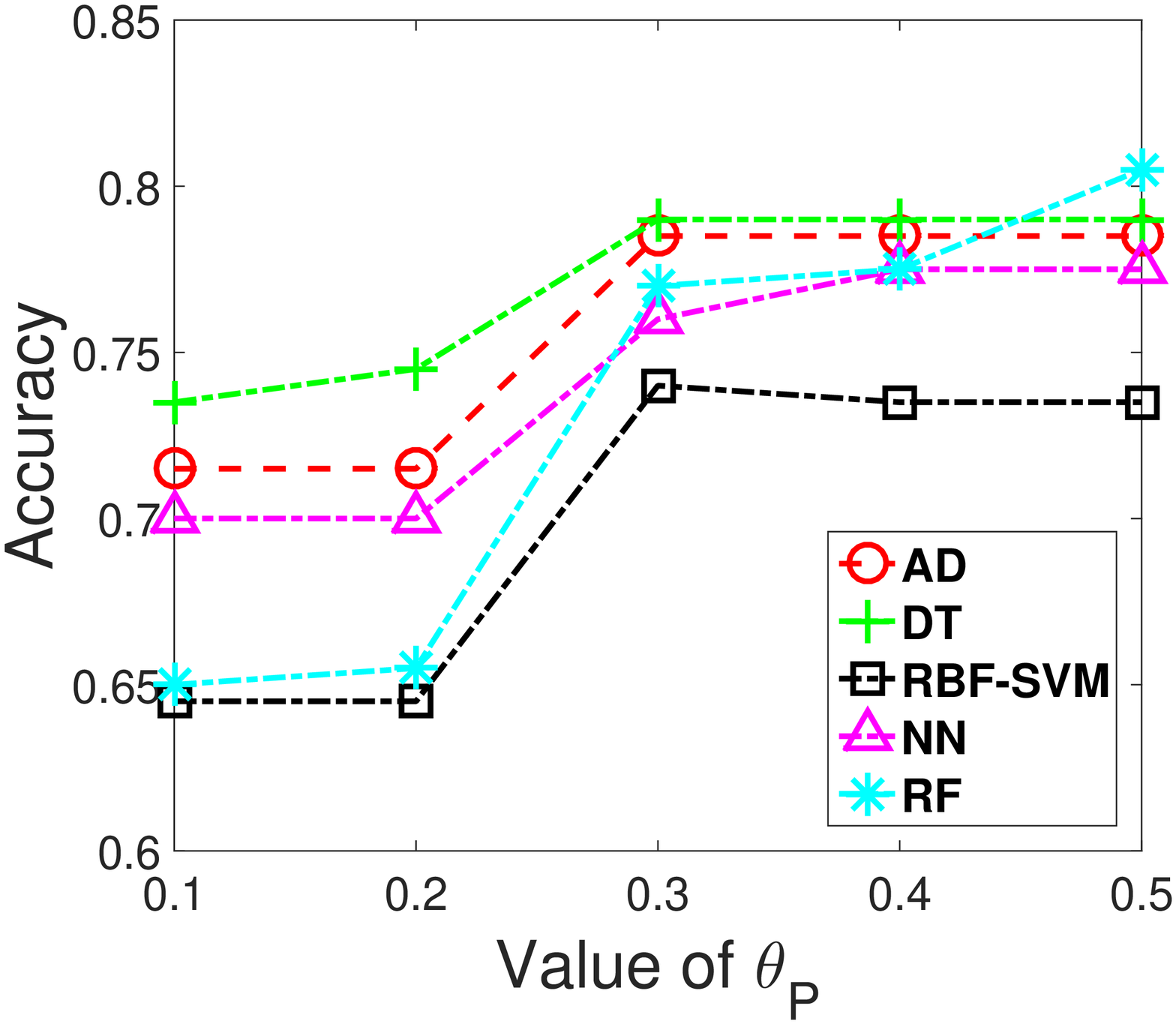}
 	}
 	 		\vspace{-2ex}
 	\caption{Results on varying Feature Selection Parameter ($\theta_P$). }
 	\label{fig:exp_CST_Property}
 \end{figure*}

\begin{figure*}[t]
	\centering
\subfigure[\small{Pair (cute,Animal)} ]{
	\label{fig:CST_Opinion_Anima1l1}
	\includegraphics[width=0.23\textwidth,height=0.16\textheight]{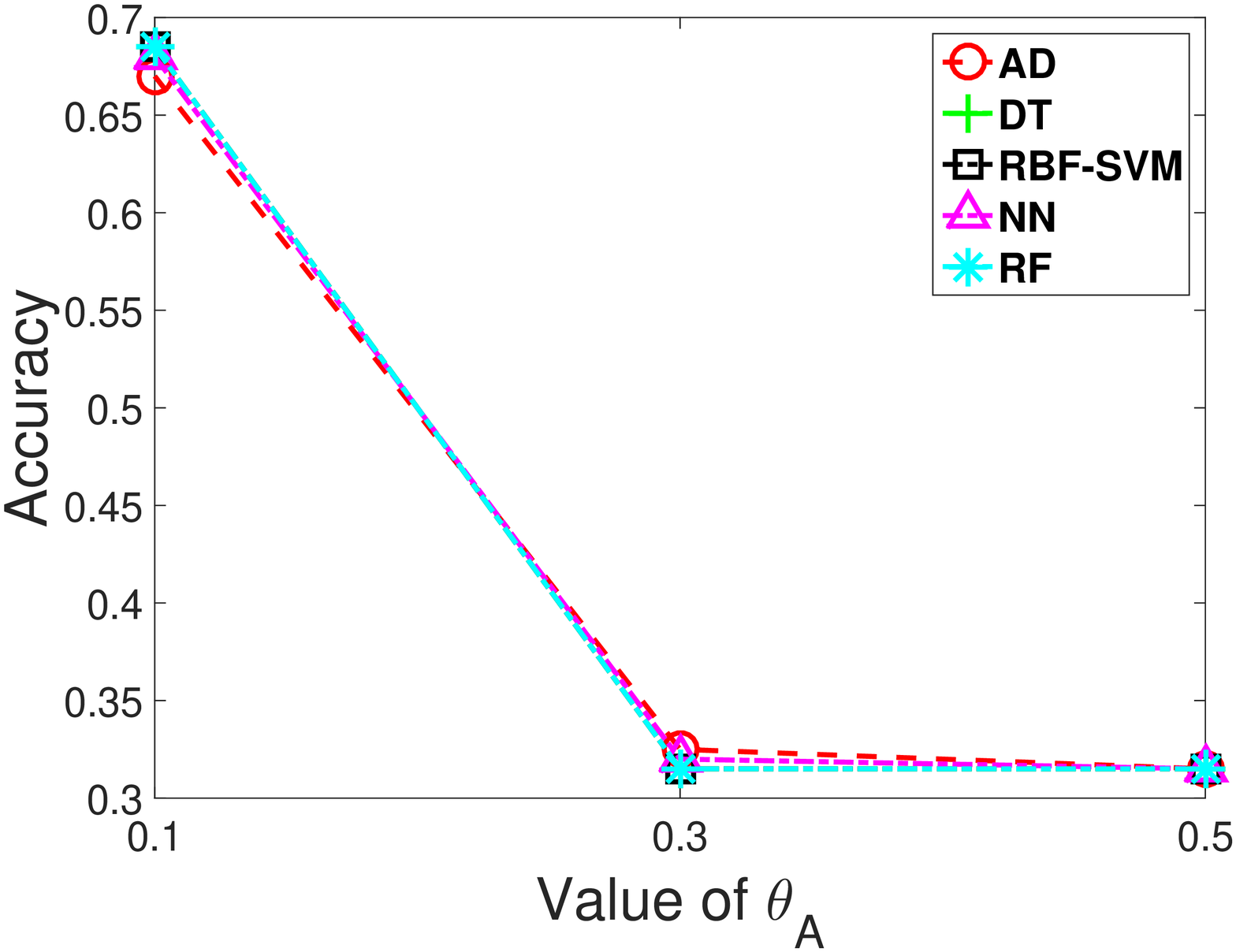}
}
\subfigure[\small{Pair (old,Building)}]{
	\label{fig:CST_Opinion_Buildin1g1}
	\includegraphics[width=0.23\textwidth,height=0.16\textheight]{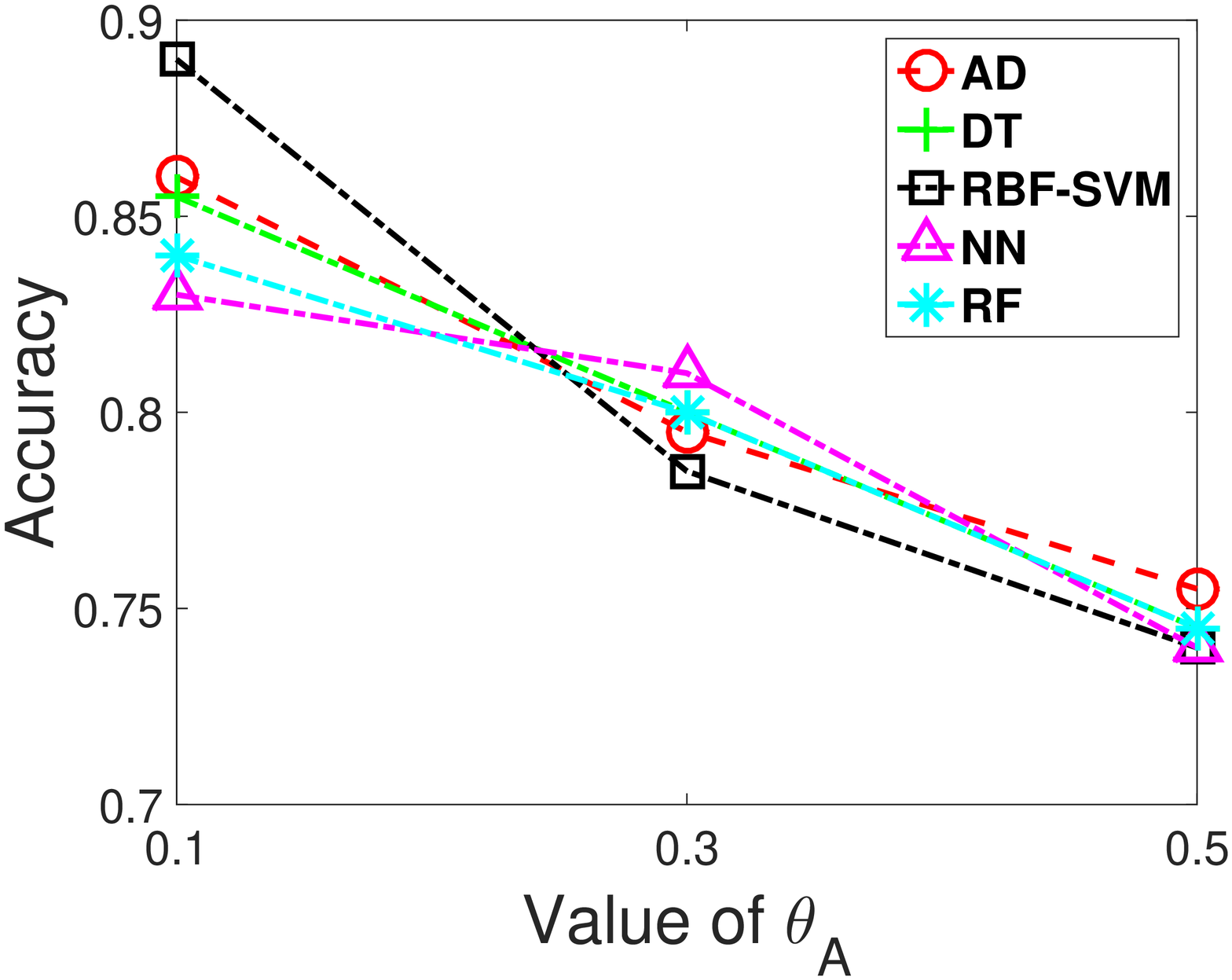}
}
\subfigure[\small{Pair (experienced,Athlete)}]{
	\label{fig:CST_Opinion_Athlete11}
	\includegraphics[width=0.23\textwidth,height=0.16\textheight]{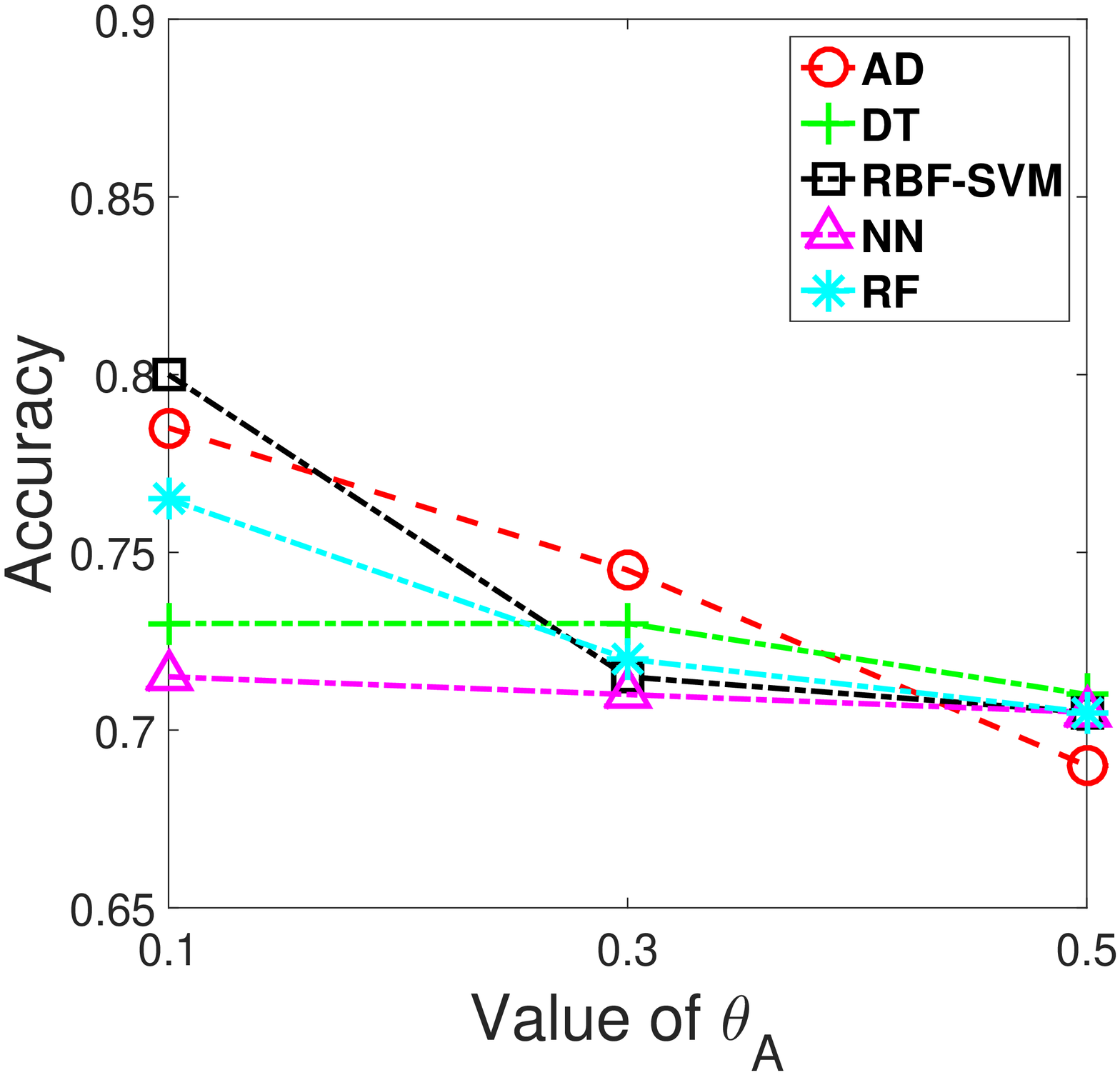}
}
\subfigure[\small{Pair (popular,Film)}]{
	\label{fig:CST_Opinion_Film11}
	\includegraphics[width=0.23\textwidth,height=0.16\textheight]{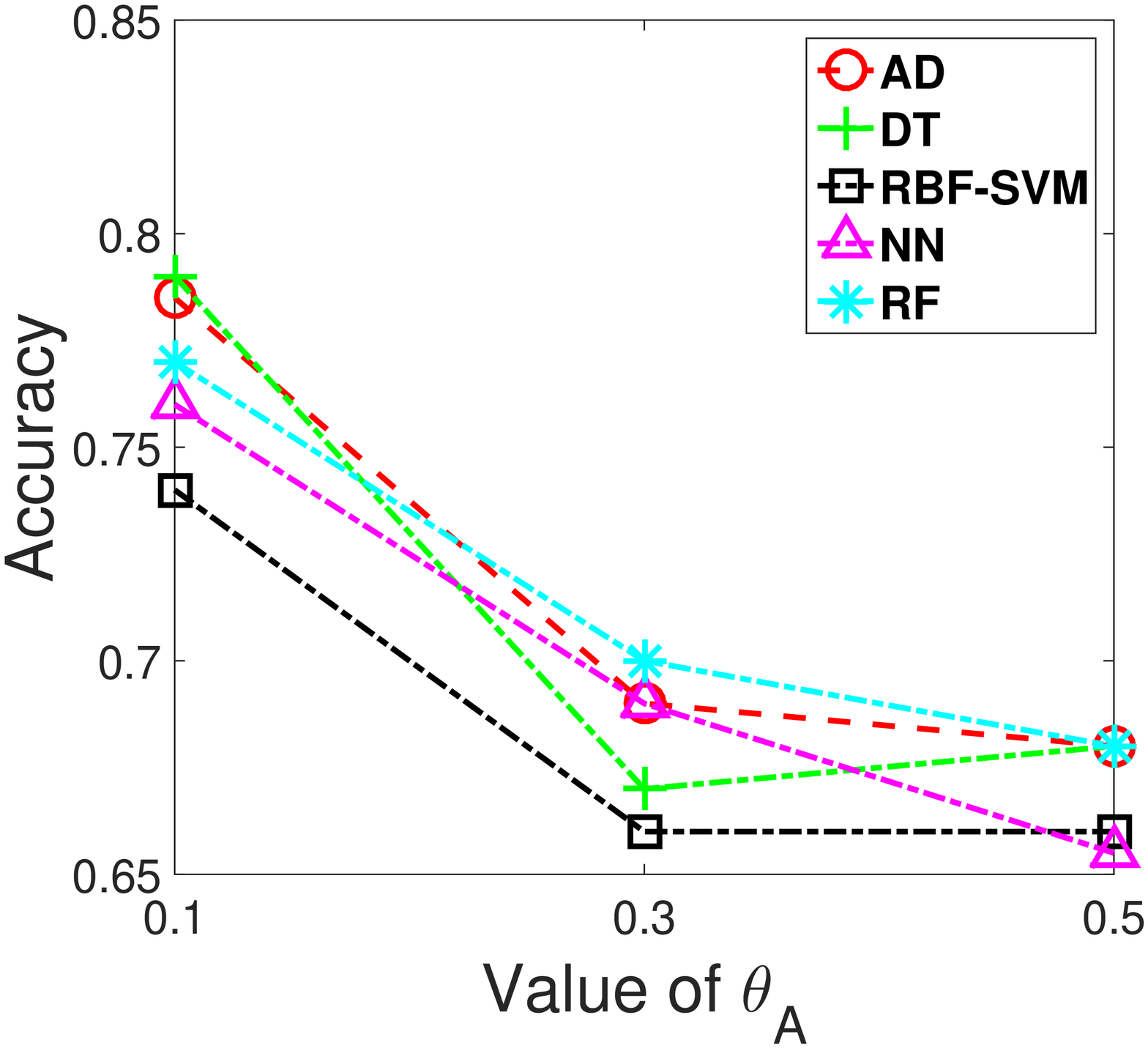}
}
	 		\vspace{-2ex}
	\caption{Results on varying Opinion Aggregation Parameter ($\theta_A$). }
	\label{fig:exp_CST_Opinion}
   
\end{figure*}

\begin{table}[t]
	\centering
	\vspace{-1ex}
	\caption{Crowdsourced ST Pair Applying Performance}
	\label{tab:exp_ST}
	\begin{tabular}{|c|c|c|} \hline
		\textbf{ST Pair}  & \textbf{Classification Model} & \textbf{Accuracy} \\ \hline
	(big,City) & DT  &  0.925 \\ \hline
	(experienced, Athlete) & RBF-SVM  & 0.80 \\ \hline
	(cute,Animal) & RBF-SVM/RF/DT  &  0.685 \\ \hline
	(old,Building) & RBF-SVM  & 0.89 \\ \hline
	(popular,Film) & DT  & 0.79\\ \hline
	\end{tabular}
\end{table}

\subsection{Knowledge Inference}
\label{subsec:experiment_transfer}
After the crowdsourced ST pair applying process, we have acquired a set of subjective knowledge facts, either collected from the crowd or obtained through the classification model. Then, we can perform knowledge inference to acquire more knowledge facts. Some resemble relationship of the selected ST pairs are shown in Table~\ref{tab:exp_KI_STpairs}. In order to verify the effectiveness of our knowledge inference approach, we evaluate two metrics: the number of inferred facts and the accuracy of inferred facts. As we do not have the ground truth, we sample 100 facts for each pair and verify the correctness manually. We ask three students to label whether the fact is correct or not and derive the answer by majority voting. The results of the knowledge inference performance are shown in Table~\ref{tab:exp_KI}.

From the results, we can observe that, on the one hand, large amount of knowledge facts could be inferred through the subjective knowledge inference approach; on the other hand, the inferred knowledge facts have high accuracy. Compare the accuracy results with those presented in Table~\ref{tab:exp_ST}, we can observe that the inferred knowledge of each ST pair has close but higher accuracy than the facts acquired by crowdsourced ST pair applying, which confirms that our proposed knowledge inference process does not introduce significant noises and verifies the high quality of our knowledge inference rules. To conclude, our knowledge inference approach can help to derive more high quality knowledge facts compared with that only using the crowd and classification models in the crowdsourced ST pair applying process. 

%
\begin{table}[t]\small
	\centering
	\caption{Subjective Resemble Relationship}
	\label{tab:exp_KI_STpairs}
	\begin{tabular}{|c|c|c|} \hline
		\textbf{ST Pair}  &\textbf{Subjective Resemble Relationship Pairs}   \\ \hline
		(big,City) &(small,City) , (big, Settlement) , (large,Place) \\  \hline
		(experienced, Athlete) & (experienced, SoccerPlayer) , (trained, Boxer) \\ \hline
		(cute,Animal) & (lovely, Animal) , (lovely,Species)   \\ \hline
		(old,Building) & (old, Hotel) , (new, Museum) \\ \hline
		(popular,Film) & (popular, Work) , (neglected, Film)  \\ \hline
	\end{tabular}
\end{table}


\begin{table}[t]\small
	\centering
	\caption{Knowledge Inference Performance}
	\label{tab:exp_KI}
	\begin{tabular}{|c|c|c|c|} \hline
		\textbf{ST Pair} &\textbf{\#Seed Facts} &\textbf{\#Inferred Facts}  &\textbf{Accuracy}   \\ \hline
		(big,City) & 10,354  & 93,186 & 0.92\\  \hline
		(experienced, Athlete) & 499 & 3,488 & 0.83 \\ \hline
		(cute,Animal) & 4,096 & 12,284   &0.76 \\ \hline
		(old,Building)& 233 &  1,398 &  0.93 \\ \hline
		(popular,Film)& 272 & 1,632 &  0.87 \\ \hline
	\end{tabular}
\end{table}

\section{Related Work}
\label{sec:related}
In this section, we discuss the  works related to  subjective knowledge acquisition, knowledge base enrichment and crowdsourcing.

Subjective knowledge acquisition is closely related to works that associating properties with entities. Some works have been conducted for commonsense knowledge acquisition~\cite{conceptnet}~\cite{resource}~\cite{contextual}~\cite{DBLP:conf/aaai/TandonMW14}. 
WebChild~\cite{webchild} presents a method for automatically constructing a large commonsense knowledge base, it contains triples that connect
nouns with adjectives via fine-grained relations. 
Entitytagger~\cite{entitytagger}, presented by Chakrabarti et al., automatically
associate descriptive phrases, referred to as etags (entity tags), to each entity. 
Instead of \textit{subjective properties}, these works focus on the less controversial and more objective properties, which is not related to obtaining \textit{dominant opinion}. The most similar work is \textsc{surveyor}, which mines the dominant opinion on the web content of whether a subjective property applies to a type. However, they does not consider to use the existing information in knowledge base and resorting to the crowd for subjective knowledge acquisition.
%

Knowledge base enrichment, completion and population have been widely studied. There are two mainstreams: internal methods, which use only the knowledge contained in the knowledge base to predict missing information~\cite{DBLP:conf/esws/VolkerN11}~\cite{AMIE}; external methods,which use sources of knowledge such as text corpora or other knowledge base to add new knowledge facts~\cite{DBLP:conf/www/WestGMSGL14}~\cite{DBLP:conf/www/NuzzoleseGPC12}~\cite{KBP_Successful}~\cite{KBP_TacklingChallenge}. However,  these works are limited to add \textit{objective knowledge} and neglect \textit{subjective knowledge}. Moreover, they do not consider to make use of a natural source of knowledge, the crowd, to complete/enrich the existing knowledge base.

Recently, 
the increasing popularity of crowdsourcing brings new trend to leverage the power of the crowd in knowledge acquisition, data integration and  many other applications. 
Kondreddi et al.~\cite{CrowdsourcedKnowledge} proposes a hybrid approach
that combines information extraction technique with human computation for knowledge acquisition. Marta et al.~\cite{CrowdsourcedKnowledgeAcquisition} presents a hybrid-genre workflow for games in crowdsourced knowledge acquisition process.  Works~\cite{DBLP:conf/chi/ChiltonLEWL13}~\cite{DBLP:conf/hcomp/BraggMW13}~\cite{DBLP:conf/icdm/MengT0C15} present approaches that use the wisdom of crowd to perform taxonomy construction.  Crowdsourcing also proved to have good performance in applications such as entity resolution~\cite{CrowdER}~\cite{DBLP:journals/pvldb/VesdapuntBD14}, schema matching~\cite{DBLP:journals/pvldb/ZhangCJC13}, translation~\cite{zaidan2011crowdsourcing} and so forth. 

\section{Conclusion}
\label{sec:conclusion}
In our work, we propose a  system
\textit{\underline{C}r\underline{o}wdsourced \underline{s}ubjective \underline{k}nowledge \underline{a}cquisition} (\textit{CoSKA}), 
for subjective knowledge acquisition powered by crowdsourcing and existing KBs.
The acquired knowledge can be encoded into existing KBs to perform KB enrichment in the subjective dimension which can bridge the gap between existing objective knowledge and the subjective queries.  Our \textit{CoSKA} system, consists of  three stages: ST pair selection, Crowdsourced ST pair applying and knowledge inference. To resolve the conflict between large scale knowledge facts and the limited crowdsourcing resource, we define subjective knowledge inference rules among ST pairs and perform knowledge inference to derive more knowledge facts. We formulate the ST pair selection problem as a \textit{Maximum Knowledge Inference Problem} which is NP-hard and we propose a diversity-aware forward greedy algorithm for ST pair selection. The crowdsourced ST pair applying problem is formulated as a classification task to further improve the system scalability. Experimental results on real knowledge base and crowdsourcing platform verify that our system, \textit{CoSKA},  could derive large amount accurate subjective knowledge facts with a comparative low crowdsourcing cost.     

\bibliographystyle{abbrv}
\bibliography{sigproc}  

\end{document}